\newcommand{\nc}{\newcommand}
\newcommand{\rnc}{\renewcommand}
\newcommand*\rel@kern[1]{\kern#1\dimexpr\macc@kerna}
\newcommand*\widebar[1]{%
  \begingroup
  \def\mathaccent##1##2{%
    \rel@kern{0.8}%
    \overline{\rel@kern{-0.8}\macc@nucleus\rel@kern{0.2}}%
    \rel@kern{-0.2}%
  }%
  \macc@depth\@ne
  \let\math@bgroup\@empty \let\math@egroup\macc@set@skewchar
  \mathsurround\z@ \frozen@everymath{\mathgroup\macc@group\relax}%
  \macc@set@skewchar\relax
  \let\mathaccentV\macc@nested@a
  \macc@nested@a\relax111{#1}%
  \endgroup
}
\rnc{\thesection}{\arabic{section}}
\rnc{\thesubsection}{\thesection.\arabic{subsection}}
\rnc{\thesubsubsection}{\thesubsection.\arabic{subsubsection}}
\newtheorem{definition}{Definition}
\newtheorem{proposition}[definition]{Proposition}
\newtheorem{lemma}[definition]{Lemma}
\DeclareMathOperator{\Tr}{Tr}
\nc{\tcr}[1]{{\color{red} #1}}
\nc{\tcb}[1]{{\color{blue} #1}}
\nc{\psic}{\psi^{c}}
\nc{\two}[1]{\underline{2^{d-#1}}}
\rnc{\H}{\mathcal{H}}
\nc{\Hanc}{\mathcal{H}_{\text{anc}}}
\nc{\psianc}{\psi_{\text{anc}}}
\nc{\lampow}{\lambda^{1/d}}
\nc{\proj}[1]{\ket{#1}\!\bra{#1}}
\nc{\pro}[1]{#1 #1^\dagger}
\nc{\RR}{{{\mathbb R}}}
\nc{\CC}{{{\mathbb C}}}
\nc{\FF}{{{\mathbb F}}}
\nc{\NN}{{{\mathbb N}}}
\nc{\ZZ}{{{\mathbb Z}}}
\nc{\MIO}{{\text{\rm MIO}}}
\nc{\DIO}{{\text{\rm DIO}}}
\nc{\SIO}{{\text{\rm SIO}}}
\nc{\IO}{{\text{\rm IO}}}
\nc{\SEP}{{\text{SEP}}}
\nc{\NS}{{\text{NS}}}
\nc{\LOCC}{{\text{LOCC}}}
\nc{\PPT}{{\text{PPT}}}
\nc{\EXT}{{\text{EXT}}}
\nc{\OLOCC}{{\text{1-LOCC}}}
\nc{\SEPP}{{\text{SEPP}}}
\nc{\MC}{{\text{\rm MC}}}
\nc{\cE}{\mathscr{E}}
\nc{\ketbra}[1]{\ket{#1}\!\!\bra{#1}}
\newcommand{\ketbraa}[2]{\ket{#1}\!\!\bra{#2}}
\newcommand{\id}{\mathbbm{1}}
\nc{\MM}{\widetilde{\M}}
\nc{\Ml}{\M^{\leq}}
\nc{\mleq}{\preceq}
\nc{\mgeq}{\succeq}
\nc{\ox}{\otimes}
\nc{\wt}{\widetilde}
\nc{\SDP}{\text{\rm SDP}}
\nc{\cc}{{\circ\circ}}
\nc{\mnorm}[1]{\norm{#1}{[m]}}
\nc{\F}{\mathcal{F}}
\nc{\M}{\mathcal{M}}
\let\oldproofname\proofname
\rnc{\proofname}{\rm\bf{\oldproofname}}
\rnc{\qedsymbol}{{\color{gray!50!black}\rule{0.6em}{0.6em}}}
\newcommand{\bb}{\begin{equation}}
\newcommand{\bbb}{\begin{equation*}}
\newcommand{\ee}{\end{equation}}
\newcommand{\eee}{\end{equation*}}
\nc{\note}[1]{{\color{blue!90!black} #1}}
\newcommand{\notts}{\affiliation{School of Mathematical Sciences and Centre for the Mathematics and Theoretical Physics of Quantum Non-Equilibrium Systems, University of Nottingham, University Park, Nottingham NG7 2RD, United Kingdom}}
\begin{document}

\title{Assisted Work Distillation}

\author{Benjamin Morris}
\email{benjamin.morris@nottingham.ac.uk}
\notts

\author{Ludovico Lami}
\email{ludovico.lami@gmail.com}
\notts

\author{Gerardo Adesso}
\email{gerardo.adesso@nottingham.ac.uk}
\notts

\begin{abstract}
We study the process of assisted work distillation. This scenario arises when two parties share a bipartite quantum state $\rho_{AB}$ and their task is to locally distil the optimal amount of work when one party is restricted to thermal operations whereas the other can perform general quantum operations and they are allowed to communicate classically. We demonstrate that this question is intimately related to the distillation of classical/quantum correlations. In particular, we show that the advantage of one party performing global measurements over many copies of $\rho_{AB}$ is related to the non-additivity of the entanglement of formation. We also show that there may exist work bound in the quantum correlations of the state that is only extractable under the wider class of local Gibbs-preserving operations.
\end{abstract}

%\date{\today}
\maketitle

%%%%%%%%%%%%%%%%%%%%%%%%%%%%%%%%%%%%%%%%%%%%%%%%%%%%%%%%%%%%%%%%%%%%%%%%%%%%%%%%%%%%%%%%%%%%%%%%%%%%%%%%%%%%%%%
%%%%%%%%%%%%%%%%%%%%%%%%%%%%%%%%%%%%%%%%%%%%%%%%%%%%%%%%%%%%%%%%%%%%%%%%%%%%%%%%%%%%%%%%%%%%%%%%%%%%%%%%%%%%%%%
%%%%%%%%%%%%%%%%%%%%%%%%%%%%%%%%%%%%%%%%%%%%%%%%%%%%%%%%%%%%%%%%%%%%%%%%%%%%%%%%%%%%%%%%%%%%%%%%%%%%%%%%%%%%%%%

\textbf{\em Introduction}.--- The recently conceived field of quantum thermodynamics represents a drive to understand the interplay of the two fundamental theories of thermodynamics and quantum mechanics. Scientists from various disciplines such as open quantum systems~\cite{alicki2018introduction}, stochastic thermodynamics~\cite{elouard2015stochastic} and information theory~\cite{goold2016role} are utilizing their respective tools to answer these fundamental questions. In particular, recent work~\cite{janzing2000thermodynamic, brandao2013resource} has demonstrated that thermodynamics may be understood from a resource-theoretic perspective, allowing researchers to investigate thermodynamic transformations in a quantum information setting.

\begin{figure}[t]
	\centering
	\includegraphics[width=8cm]{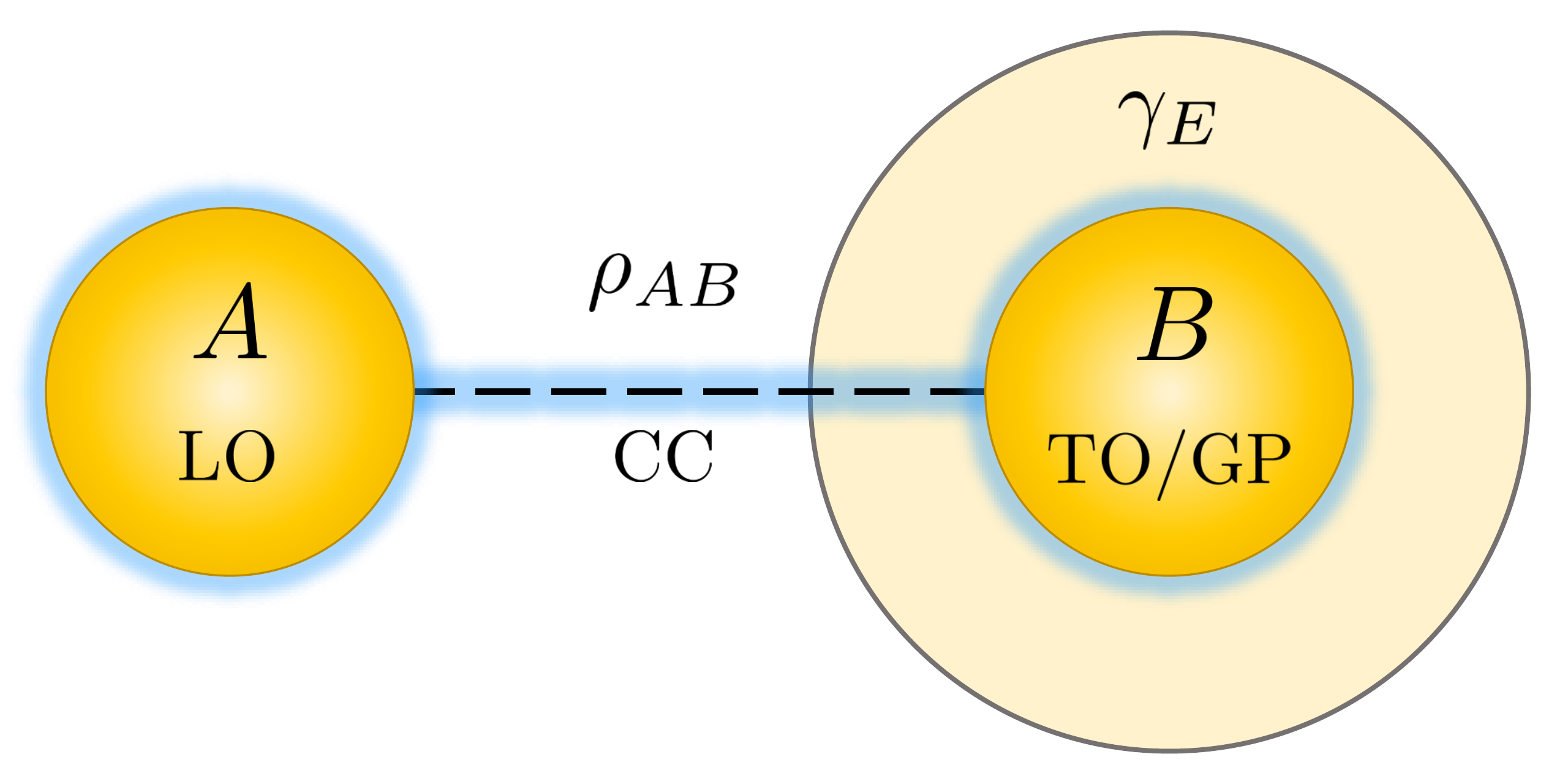}
	\caption{We investigate distillation of work from a quantum system $B$ controlled by an observer, Bob, who is constrained to thermal operations or Gibbs-preserving operations, and is assisted by another party, Alice, who can perform arbitrary local operations on an ancillary system $A$ and communicate classically with Bob. The \textit{work of assistance} and the \textit{work of collaboration} are defined and related to the correlations in the state $\rho_{AB}$ shared by Alice and Bob.}
	\label{figdiag}
\end{figure}

In this paper, we investigate the task of {\em assisted work distillation}, see Fig.~\ref{figdiag}. Here, the process of work distillation is intended in a resource theoretic framework to be the asymptotic distillation of reference states with energy but no entropy by means of thermal operations, meaning that the  {\em distillable} (or {\em extractable}) {\em work} can be quantified by how distinguishable a quantum state is from a Gibbs equilibrium state~\cite{brandao2013resource} --- for other definitions of work in quantum thermodynamics see e.g.~\cite{goold2016role}. In the assisted scenario, two parties, Alice ($A$) and Bob ($B$), share many copies of a bipartite state $\rho_{AB}$. Between them their goal is to maximize the quantity of distillable work on Bob's subsystem.
Alice may perform arbitrary quantum operations on her subsystem whereas Bob is restricted to thermal operations on his. By utilizing correlations within $\rho_{AB}$ and classical communication between the parties we demonstrate key features of Bob's distillable work.

In particular, we characterize the set of shared states which allow for local work distillation. We also demonstrate that for a protocol involving one-way communication between the parties explicit expressions for the local distillable work, which we dub the \textit{work of assistance} (in analogy with the \textit{entanglement of assistance}~\cite{divincenzo1999entanglement}), can be derived both in the regularized and un-regularized scenarios. From these expressions we make use of two central results from quantum information theory to show that Alice performing global measurements over many copies of the shared state offers an explicit advantage over single copy measurements. We also show that this advantage disappears  when the initial state is pure.

In addition to the \textit{work of assistance} we also define the \textit{work of collaboration}, defined as such to allow two-way communication between the parties and local Gibbs-preserving operations~\cite{faist2015gibbs} on Bob's side. We show that by allowing this collaboration and the wider class of operations, the local distillable work can increase. We also demonstrate that for an initial pure state the \textit{work of collaboration} may yield an increase in distillable work by an amount proportional to the entropy of Bob's subsystem $S(\rho_{B})$, where $\rho_{B}=\Tr_{A}\left[\rho_{AB}\right]$.

It is important to consider the realm in which our results apply. Within the resource theoretic framework it is typical to consider resource inconvertibility in the asymptotic scenario. This is particularly pertinent for thermodynamics due to its equivalence to taking the thermodynamic limit, which suppresses the appearance of fluctuations.
%In this context the distillable work of a state $\rho$  may be understood in terms of how far the state $\rho$ is from a thermal equilibrium state $\gamma$~\cite{brandao2013resource}.

\textbf{\em Resource theories of thermodynamics}. --- %
We start by explicitly defining what is meant by the resource theory of thermal operations (TO). Originally introduced by~\cite{janzing2000thermodynamic,brandao2013resource} the allowed operations for a quantum system $S$ with Hilbert space $\mathcal{H}$ and Hamiltonian $H_S$ are the completely-positive trace-preserving (CPTP) maps $\mathcal{E}:\mathcal{L}(\mathcal{H})\rightarrow\mathcal{L}(\mathcal{H})$ of the form
\begin{align}\label{TO}
	\mathcal{E}(\rho)=\Tr_{E}\left(U_{SE}\left(\rho_S\otimes\gamma_E\right)U_{SE}^\dagger\right),
\end{align}
where $U_{SE}$ is an arbitrary unitary operation, acting jointly on the system $S$ and a reservoir $E$, that commutes with the global Hamiltonian $\left[U,H_S\otimes\id_E+\id_S\otimes H_E\right]=0$, and  $\gamma={Z}^{-1}e^{-\beta H}$ denotes the Gibbs thermal equilibrium state at inverse temperature $\beta$ and partition function ${Z}$.
The joint unitary operations and partial trace define the \textit{free operations} of the resource theory whereas the Gibbs states define the \textit{free states}. By explicitly accounting for the resources used, the TO framework provides a general setting within which to study thermodynamic transformations, in particular the distillation of work.

In this setting, following~\cite{horodecki2013fundamental} we define the distillable work from a system $B$ in the state $\rho_B$ as the maximum number $RE$ such that the transformations $\rho_B^{\otimes n} \otimes \ketbra{0}_P^{\otimes [Rn]}\to \ketbra{1}_P^{\otimes [Rn]}$ are possible with TO at background inverse temperature $\beta$ with asymptotically vanishing error. Here, referring to (\ref{TO}), we are considering a composite system $S$ which consists of the principal system $B$ with Hamiltonian $H_B$ and a qubit battery $P$ with Hamiltonian $H_P\coloneqq E\ketbra{1}_P$, where $E$ is a free parameter we are allowed to optimize over. In formula,
\begin{eqnarray} \label{distillablework}
W(\rho_B)\! &\coloneqq &\sup\bigg\{ RE: \\[-.2cm]
&&\!\lim_{n\to\infty}\! \inf_{\Lambda\in\mathrm{TO}}\left\| \Lambda\Big(\rho_B^{\otimes n}\!\!\otimes\! \ketbra{0}_P^{\otimes [Rn]} \Big)\! - \ketbra{1}_P^{\otimes [Rn]} \right\|_1\!=\! 0 \bigg\}. \nonumber
\end{eqnarray}
It follows from the main result of~\cite{brandao2013resource} (see appendix~A in Supplemental Material~\cite{EPAPS} for an explicit derivation) that the distillable work defined in Eq.~\eqref{distillablework} equals the change in \textit{free energy}:
\begin{equation} \label{workisfreeenergy}
    W(\rho_B) \equiv \Delta F(\rho_B) = \frac{1}{\beta} S(\rho_B\|\gamma_B) ,
\end{equation}
with $S(\rho\|\gamma)=\Tr\left(\rho \log \rho - \rho \log \gamma\right)$ being the relative entropy of athermality. Observe that $S(\rho\|\gamma)$ is monotonically non-increasing under TO.

A larger class of operations are Gibbs-Preserving (GP) operations; these are CPTP maps $\Lambda$ that admit as their fixed point the Gibbs state at a given temperature, i.e.\ such that $\Lambda(\gamma_B)=\gamma_B$. The motivation behind this alternative framework that regards GP operations as free operations for thermodynamics, is that any non-GP operation, $\Lambda(\gamma)=\sigma\neq\gamma$, could be used to extract an arbitrarily large amount of work from $\sigma^{\otimes n}$ as $n\rightarrow\infty$. It can be clearly seen from~\eqref{TO} that TO are a subset of GP, and the inclusion is known to be strict~\cite{faist2015gibbs}.

\textbf{\em Work of assistance}. --- In this section we consider the case where Alice and Bob have access to the shared state $\rho_{AB}$ and we allow one-way classical communication from Alice to Bob. This is similarly motivated as the recently studied `conditioned thermal operations'~\cite{narasimhachar2017resource}. Alice, whom operations are unrestricted, may perform on her subsystem the positive operator-valued measurement (POVM) $\{\Pi_{A,i}\}$, whose associated probabilities are $p_i=\Tr\left[\rho_A \Pi_{A,i}\right]$, whereas Bob is restricted to TO. Alice performing her measurement and communicating the outcome to Bob results in him having access to the ensemble $\{p_i, \tilde{\rho}_{B,i}\}$, where
\begin{align} \label{rho tilde}
	\tilde{\rho}_{B,i}=\frac{1}{p_i}\Tr_{A}\left[\left(\Pi_{A,i}\otimes\id_B\right) \rho_{AB}\right].
\end{align}
In the scenario we consider, Alice's goal is to help Bob to distil as much work as possible.
From this train of thoughts we define our first quantity of interest, the \textit{work of assistance},
\begin{align}\label{workass1}
	W_{a}^{B|A}(\rho_{AB}):=\max_{\{\Pi_{A,i}\}}\frac{1}{\beta}\sum_i p_i S(\tilde{\rho}_{B, i}||\gamma_B),
\end{align}
where the maximization is taken over the set of Alice's measurements (i.e.\ POVMs). Using convexity, we see that this quantity is lower bounded by $\frac{1}{\beta}S(\rho_B||\gamma_B)$, which of course means that  being assisted by Alice is generally no worse than having no assistance at all. Not only: as we show in appendix~B~\cite{EPAPS}, all states $\rho_{AB}$ that exhibit some form of correlation, i.e.\ such that $\rho_{AB}\neq \rho_A\otimes \rho_B$ is not factorized, satisfy the strict inequality $W_a^{B|A}(\rho_{AB})> \frac{1}{\beta} S(\rho_B\|\gamma_B)$, implying that there is an assisted protocol that helps Bob distilling more work. In particular, the states from which Bob can distil no work at all even in the assisted setting are simply products of the form $\Gamma_{AB}=\sigma_A\otimes\gamma_B$, from now on referred to as \textit{quantum-thermal} (QT) states, the same states have been found in the conditional thermal operations setting~\cite{narasimhachar2017resource}.

In appendix~C~\cite{EPAPS} we show that $W_{a}^{B|A}$ can be written as
\begin{align}\label{workass4}
W_a^{B|A}(\rho_{AB})=\frac{1}{\beta}\left(S(\rho_B||\gamma_B) + J^{\rightarrow}(\rho_{AB})\right),
\end{align}
where $J^{\rightarrow}(\rho_{AB})$ is the Henderson--Vedral~\cite{henderson2001classical} measure of classical correlations (with respect to measurements on Alice) defined as
$J^{\rightarrow}(\rho_{AB})\coloneqq \max_{\{\Pi_{A,i}\}}\left(S(\rho_B)-  \sum_ip_iS(\tilde{\rho}_{B,i})\right)$. The result in equation~\eqref{workass4} clearly separates the quantity of work distillable by Bob with/without the assistance of Alice. This is in agreement with a recent result in~\cite{PhysRevLett.121.120602}.

An important question to ask is whether this quantity of work changes if Alice is able to perform measurements over many copies of the shared initial state $\rho_{AB}$. In order to answer this question we continue by defining the regularized work of assistance,
\begin{align}\label{regularized}
W_{a, \infty}^{B|A}(\rho_{AB})\coloneqq \lim_{n\rightarrow \infty}\frac{1}{n}W_a^{B|A} \left(\rho_{AB}^{\otimes n}\right).
\end{align}
In appendix~D~\cite{EPAPS} we show that the above quantity indeed yields the best achievable rate of work distillation in the case where the only allowed communication is from Alice to Bob.
Although the regularization makes it hard to compute, the r.h.s.\ of~\eqref{regularized} can nonetheless be related to a quantifier known as \textit{distillable common randomness} $C_D$, introduced in~\cite{henderson2001classical} as
\begin{align}\label{CD}
C_D^{\rightarrow}(\rho_{AB})=\lim_{n\rightarrow \infty}\frac{1}{n}J^{\rightarrow}\left(\rho_{AB}^{\otimes n}\right),
\end{align}
and then interpreted operationally in~\cite{devetak2004distilling}.  The operational interpretation of $C_D$ rests on protocols that extract from $n$ independent copies of $\rho_{AB}$ a total of $C$ maximally correlated classical bits  via $R$ bits of noiseless classical communication between Alice and Bob with vanishing error. The quantity $C_D$ is thus defined as the maximum net gain $(C-R)/n$ in the limit $n\rightarrow\infty$. For a discussion from the thermodynamical point of view, see~\cite{oppenheim2002thermodynamical}.

Using the definition in equation~\eqref{CD} and the fact that the relative entropy is additive, we can therefore write the regularized work of assistance as,
\begin{align}\label{workassinf}
W_{a, \infty}^{B|A}(\rho_{AB})=\frac{1}{\beta}\left(S(\rho_B||\gamma_B)+C_D^{\rightarrow}(\rho_{AB})\right),
\end{align}
again clearly separating the quantity of distillable work with/without the assistance of Alice.

Upon defining the regularized version of $W_a^{B|A}(\rho_{AB})$ we should ask whether giving Alice the ability to perform global measurements over many copies of the shared state $\rho_{AB}$ increases the average work that Bob can distil. In order to answer this question we employ two fundamental results from the field of quantum information. On the one hand,~\cite[Theorem~1]{koashi2004monogamy} states that
\begin{align}
\label{Eform}E_f(\rho_{A'B})+J^{\rightarrow}(\rho_{AB})&=S(\rho_B),\\
\label{Ecost}E_C(\rho_{A'B})+C^{\rightarrow}_D(\rho_{AB})&=S(\rho_B),
\end{align}
provided that $\rho_{A'B}$ is the $A$-complement of $\rho_{AB}$, i.e.\ there exists a pure state extension $\rho_{AA'B}$ that satisfies $\Tr_A\left[\rho_{AA'B}\right]=\rho_{A'B}$ and $\Tr_{A'}\left[\rho_{AA'B}\right]=\rho_{AB}$. Here, $E_f(\rho_{AB})$ stands for the entanglement of formation~\cite{bennett1996purification}, while the entanglement cost is given by $E_C(\rho_{AB})=\lim_{n\rightarrow \infty}\frac{1}{n}E_f(\rho_{AB}^{\otimes n})$, and quantifies the amount of Bell states needed to form $\rho_{AB}$ via LOCC protocols in the asymptotic limit of many copies~\cite{hayden2001asymptotic}.

Substituting equations~\eqref{Eform}--\eqref{Ecost} into equations~\eqref{workass4}--\eqref{workassinf} respectively allows us to write $W_a^{B|A}(\rho_{AB})$ and $W_{a, \infty}^{B|A}(\rho_{AB})$ in terms of these entanglement measures,
\begin{align}
W_{a}^{B|A}(\rho_{AB})=&\frac{1}{\beta}\big(S(\rho_B||\gamma_B)+S(\rho_B)-E_{f}(\rho_{A'B})\big)\, , \label{wef}\\
W_{a, \infty}^{B|A}(\rho_{AB})=&\frac{1}{\beta}\big(S(\rho_B||\gamma_B)+ S(\rho_B)-E_C(\rho_{A'B})\big)\, . \label{wec}
\end{align}
This allows us to take advantage of another fundamental result of quantum information, the non-additivity of $E_f(\rho_{AB})$~\cite{hastings2008counterexample}. Therefore, despite the additivity of the (relative) von Neumann entropy we can state that the ability for Alice to perform global measurements can increase the amount of work Bob can distil, i.e.\ for some states $\rho_{AB}$ it will happen that
\begin{align} \label{neq}
	W_{a}^{B|A}(\rho_{AB}) < W_{a, \infty}^{B|A}(\rho_{AB}).
\end{align}
However, for many simple classes of states the above does not happen. For instance, in appendix~E~\cite{EPAPS} we explicitly calculate $W_{a}^{B|A}$ for the  relevant family of isotropic states in arbitrary dimension, and show its additivity over multiple copies.
\\\textbf{\em Work of collaboration }. ---
Let us consider an arbitrary class of operations $\mathcal{O}$ on a thermodynamical system. We assume that $\mathcal{O}$ contains not only deterministic operations, but also so-called \textit{quantum instruments}, i.e.\ collections $\{\Phi_i\}_i$ of completely positive maps such that $\sum_i \Phi_i$ is trace-preserving. Physically, the classical label $i$ will record the outcomes of the quantum measurements that have been made throughout the process, while $\Tr\Phi_i(\rho)$ represents the probability of the outcome $i$ occurring when the state $\rho$ is processed.
In a bipartite setting, we can construct the associated set $\mathcal{O}_c^{B|A}$ of collaborative operations by concatenating in any order: (1) instruments in $\mathcal{O}$ on $B$; (2) classical communication between Alice and Bob; (3) arbitrary quantum operations on $A$. We can now define the associated \textit{work of collaboration}
in analogy with Eq.~\eqref{distillablework} as
\begin{eqnarray} \label{collab1}
&&W_{c}^{B|A}\!\big(\rho_{AB}\big) \!\coloneqq \sup\bigg\{ RE: \\[-0.2cm]
&&\qquad \lim_{n\rightarrow \infty}\! \inf_{\Lambda\in\mathcal{O}_c^{B|A}}\left\|\Lambda\Big(\rho_{AB}^{\otimes n}\otimes \ketbra{0}_P^{\otimes [Rn]}\Big)-\ketbra{1}_P^{\otimes[Rn]}\right\|_1 \!=0 \bigg\} , \nonumber
\end{eqnarray}
where it is understood that the battery $P$ pertains to Bob's system, and its Hamiltonian is again given by  $H_P\coloneqq E\ketbra{1}_P$, with $E$ a free parameter.

%where $\psi\coloneqq \ketbra{\psi}$. %The reference state is chosen to be pure, i.e.\ with vanishing entropy, so that energy and free energy coincide and represent the available work.
%Since the resource theory of TO is reversible, the choice of the reference state is entirely arbitrary. We can declare by convention that $\psi_B$ has unit distillable work, i.e.\ $\frac{1}{\beta}S\left( \psi_B\|\gamma_B\right)=1$.

By their very definition~\eqref{TO}, TO are intrinsically deterministic. Therefore, in the collaborative setting there is no information Bob can send to Alice if he is restricted to TO, and the corresponding work of collaboration reduces to the regularized work of assistance as given in~\eqref{regularized}.
To investigate the collaborative setting in greater detail it is thus indispensable to expand Bob's allowed operations to the wider class~\cite{faist2015gibbs} of GP operations, that satisfy $\Lambda(\gamma_B)=\gamma_B$. This less restrictive framework crucially allows Bob to apply non-deterministic instruments $\{\Phi_i\}_i$, which are required to satisfy $\Phi_i(\gamma_B)\propto \gamma_B$ for all $i$. The outcome $i$ can then be communicated to Alice via the classical communication channel.

From now on, we will therefore consider the work of collaboration~\eqref{collab1} as defined for the collaborative set of operations $\mathrm{GP}_c^{B|A}$ corresponding to GP operations on Bob. It is clear that QT states of the form $\Gamma_{AB}=\sigma_{A}\otimes\gamma_B$, where $\sigma_A$ is arbitrary, can be generated for free even in the TO framework. Furthermore, it can be shown that these are all the states for which $W_c^{B|A}(\rho_{AB})=0$. This suggests the following definition of the \textit{relative entropy of collaboration},
\begin{align} \label{relative1}
	W_{r}^{B|A}(\rho_{AB})\coloneqq \frac{1}{\beta}\min_{\sigma_{A}}S\left(\rho_{AB}\|\sigma_A\otimes \gamma_B\right),
\end{align}
where the minimization is taken over the set QT. In appendix~F~\cite{EPAPS} we explicitly demonstrate monotonicity of this function under the set of allowed operations. We also prove in appendix~F~\cite{EPAPS} that the minimization in~\eqref{relative1} can be explicitly solved so as to give
\begin{align} \label{relative2}
W_{r}^{B|A}(\rho_{AB})=\frac{1}{\beta}S\left(\rho_{AB}\|\rho_A\otimes\gamma_B\right).
\end{align}
Simple algebraic manipulations allow us to recast this as
\begin{align}\label{workcollabup}
	W_{r}^{B|A}(\rho_{AB})=\frac{1}{\beta}\left(S(\rho_{B}||\gamma_B)+I(\rho_{AB})\right),
\end{align}
where $I(\rho_{AB})\coloneqq S(\rho_A)+S(\rho_B) - S(\rho_{AB})$ is the mutual information quantifying total correlations between Alice and Bob.
\\\textbf{\em Comparing measures of assistance}. ---
Equation~\eqref{workcollabup} suggests that the mutual information quantifies the amount by which the collaboration between the parties increases Bob's distillable work.

In fact, we are able to demonstrate in appendix~G~\cite{EPAPS} that $W_{r}^{B|A}$ provides an upper bound on the work of collaboration. We can also observe that since TO are a subset of GP operations, the work of collaboration is no smaller than the regularized work of assistance. This can also be deduced by comparing~\eqref{workassinf} with~\eqref{workcollabup}, and using the well-known fact that $C^{\rightarrow}_D(\rho_{AB})\leq I(\rho_{AB})$~\cite{devetak2001low, devetak2004distilling}.
Putting all together:
\begin{align}\label{altogether}
	W_a^{B|A}(\rho_{AB})\le W_{a,\infty}^{B|A}(\rho_{AB}) \le W_c^{B|A}(\rho_{AB})\le W_{r}^{B|A}(\rho_{AB}).
\end{align}

Recall from~\eqref{neq} that there can be a strict inequality between the  two leftmost quantities in the above chain of inequalities. Concerning the two rightmost ones, quite interestingly, we find that the gap $W_r^{B|A}(\rho_{AB})- W_{a}^{B|A}(\rho_{AB})$ is explicitly described by the \textit{quantum discord}, a measure of the quantumness of the correlations between Alice and Bob~\cite{ollivier2001quantum, henderson2001classical}.
Indeed, by comparing equations~\eqref{workass4} and~\eqref{workcollabup}, we find
\begin{align}\label{discordia}
W_r^{B|A}(\rho_{AB})- W_{a}^{B|A}(\rho_{AB}) &= \frac{1}{\beta}\left( I(\rho_{AB})-
J^{\rightarrow}(\rho_{AB})\right) \nonumber\\&=: \frac{1}{\beta}D^{\rightarrow}(\rho_{AB}),
\end{align}
where $D^{\rightarrow}(\rho_{AB})$ is the quantum discord, quantifying the share of correlations lost between Alice and Bob as a consequence of a minimally disturbing measurement on Alice's side. This result shows that the \textit{work of collaboration} can exceed the \textit{work of assistance} by an amount bounded from above by the shared quantum correlations, measured by the discord $D^{\rightarrow}(\rho_{AB})$.   We note that recent works~\cite{francica2017daemonic, PhysRevLett.121.120602} has suggested a protocol for explicitly distilling the work locked in the quantum discord, however the operations considered lie outside those in TO.
Other interpretations for the  quantum discord in thermodynamical and related contexts have also been explored  in the literature~\cite{zurek2003quantum, oppenheim2002thermodynamical,liuzzo2016thermodynamics,adesso2016measures}.

It is particularly instructive to analyze all the quantities appearing in equation~\eqref{altogether} for the relevant case where Alice holds a purification of Bob's state, i.e.\ $\rho_{AB}=\phi_{AB}=\ketbra{\phi}_{AB}$. On the one hand, for a pure state $\phi_{AB}$ it is known~\cite{henderson2001classical, devetak2004distilling} that the Henderson--Vedral measure and distillable common randomness coincide with the local entropy of each subsystem, i.e.\ $J^{\rightarrow}(\phi_{AB}) = C_{D}^{\rightarrow}(\phi_{AB}) = S(\phi_{B})$. Hence,
\begin{align}\label{workasspure}
	W_{a}^{B|A}(\phi_{AB})= W_{a, \infty}^{B|A}(\phi_{AB})=\frac{1}{\beta}\left(S(\rho_{B}||\gamma_B)+S(\rho_B)\right),
\end{align}
implying that for an initial pure state the ability for Alice to perform global measurements over many copies gives no advantage in Bob distilling work. On the other hand, it is also elementary to verify that
\begin{align}\label{workcollabpure}
	W_{r}^{B|A}(\phi_{AB})=\frac{1}{\beta}\left(S(\rho_{B}||\gamma_B)+2\,S(\rho_B)\right).	
\end{align}
Therefore by comparing equations~\eqref{workasspure} and~\eqref{workcollabpure} it is seen that for an initial pure state we demonstrate that relaxing the local operations from TO to GP map might allow Bob to distil a bound quantity of work equal to the local entropy.

\textbf{\em Conclusion}. --- In this work we have fully characterized the task of assisted work distillation in the asymptotic scenario of quantum thermodynamics, addressing  questions left open in ~\cite{chitambar2016assisted, oppenheim2002thermodynamical}. In particular we have introduced two relevant quantities of interest, the \textit{work of assistance} and the \textit{work of collaboration}. These quantities allowed us to investigate the possible advantage of local GP operations over TO and global measurements on a system; in particular, how GP operations may allow Bob to locally distil the work bound within the quantum correlations of the initial shared state.

Although it was shown that GP operations can provide an increase in distillable work, the explicit relationship between the \textit{work of assistance} and the \textit{work of collaboration} requires further investigation, as for the latter quantity only an upper bound was derived here. We further stress that our results only hold in the asymptotic limit. It would be interesting to investigate assisted work distillation in the {\em single-shot regime}, to determine the role correlations  play in work fluctuations. This could prove useful for near-term technological applications.

The present analysis adds to the literature on assisted distillation of different quantum resources~\cite{divincenzo1999entanglement,horodecki2003local,devetak2005distillation,chitambar2016assisted,bartosz-myself-alex,streltsov2017towards}. In particular, Refs.~\cite{chitambar2016assisted, bartosz-myself-alex} studied the distillation of quantum coherence~\cite{streltsov2017colloquium}, rather than work, from Bob's system with the assistance of Alice. In that setting, Bob is limited to incoherent operations~\cite{baumgratz2014quantifying} while Alice can perform arbitrary local quantum operations, and the two parties can communicate classically. We can draw a comparison between the two settings, by noting that the additional quantity of resource that can be distilled from Bob's system thanks to Alice's assistance amounts to the entropy of Bob's reduced state in the case of coherence~\cite{chitambar2016assisted} and to the classical correlations shared between Alice and Bob in the case of work [Eq.~(\ref{workass4})].
We can further observe how the hierarchy presented in (\ref{altogether}) for assisted work distillation is analogous to the one derived in~\cite{chitambar2016assisted} for assisted coherence distillation, but the key role of quantum discord in bounding the gap between work of assistance and work of collaboration is only revealed in this paper by comparing the power of different classes of local operations for Alice (TO versus GP). It would be meaningful to revisit the assisted coherence distillation framework by imposing additional physical constraints on Alice's operations, e.g.~by adopting strictly incoherent operations~\cite{winter2016operational} or TO, and hence exploiting the methods developed in this paper for the characterization of other quantum resources.

Our findings could have implications for the understanding of the Szilard engine~\cite{szilard1929entropieverminderung}. The latter is a simple physical model which demonstrates how information may be exploited in order to extract physical work. The relevance of this model was then understood in the context of information processing by Landauer~\cite{landauer1961irreversibility}. Many recent works have discussed the application of a Szilard engine in quantum thermodynamics~\cite{kim2011quantum, mohammady2017quantum, cottet2017observing, park2013heat, zurek1986maxwell, reeb2014improved, sagawa2008second}, deriving bounds for work extraction that are related to~\eqref{workcollabup}~\cite{park2013heat, zurek1986maxwell, reeb2014improved, sagawa2008second} in a setting where a second party, historically entitled Maxwell's Demon, is in possession of a state correlated to the thermodynamic system. The converse setting, where correlations can be formed from initially uncorrelated states using thermal operations has also been studied~\cite{narasimhachar2018quantifying}.

The results presented here provide further links between the fields of quantum information and thermodynamics. In particular, how highly studied  measures of information provide us with an insight into the thermodynamics of correlations. These results both contribute to our knowledge of the fundamental nature of thermodynamics but also may become essential for the thermodynamic control of a quantum computer.

\begin{acknowledgments}
\textbf{\em Acknowledgments.}--- We are grateful to Harry Miller, Philippe Faist, Luis Correa, Carlo Maria Scandolo, Mark Whitworth, Ryuji Takagi, Francesco Plastina, and Paul Hollywood for helpful discussions. We acknowledge financial support from the European Research Council (ERC) under the Starting Grant GQCOP (Grant No.~637352) and the EPSRC (Grant No.~EP/N50970X/1).
\end{acknowledgments}

%\bibliographystyle{apsrev4-1}
%\bibliography{Index}

\begin{thebibliography}{50}%
\makeatletter
\providecommand \@ifxundefined [1]{%
 \@ifx{#1\undefined}
}%
\providecommand \@ifnum [1]{%
 \ifnum #1\expandafter \@firstoftwo
 \else \expandafter \@secondoftwo
 \fi
}%
\providecommand \@ifx [1]{%
 \ifx #1\expandafter \@firstoftwo
 \else \expandafter \@secondoftwo
 \fi
}%
\providecommand \natexlab [1]{#1}%
\providecommand \enquote  [1]{``#1''}%
\providecommand \bibnamefont  [1]{#1}%
\providecommand \bibfnamefont [1]{#1}%
\providecommand \citenamefont [1]{#1}%
\providecommand \href@noop [0]{\@secondoftwo}%
\providecommand \href [0]{\begingroup \@sanitize@url \@href}%
\providecommand \@href[1]{\@@startlink{#1}\@@href}%
\providecommand \@@href[1]{\endgroup#1\@@endlink}%
\providecommand \@sanitize@url [0]{\catcode `\\12\catcode `\$12\catcode
  `\&12\catcode `\#12\catcode `\^12\catcode `\_12\catcode `\%12\relax}%
\providecommand \@@startlink[1]{}%
\providecommand \@@endlink[0]{}%
\providecommand \url  [0]{\begingroup\@sanitize@url \@url }%
\providecommand \@url [1]{\endgroup\@href {#1}{\urlprefix }}%
\providecommand \urlprefix  [0]{URL }%
\providecommand \Eprint [0]{\href }%
\providecommand \doibase [0]{http://dx.doi.org/}%
\providecommand \selectlanguage [0]{\@gobble}%
\providecommand \bibinfo  [0]{\@secondoftwo}%
\providecommand \bibfield  [0]{\@secondoftwo}%
\providecommand \translation [1]{[#1]}%
\providecommand \BibitemOpen [0]{}%
\providecommand \bibitemStop [0]{}%
\providecommand \bibitemNoStop [0]{.\EOS\space}%
\providecommand \EOS [0]{\spacefactor3000\relax}%
\providecommand \BibitemShut  [1]{\csname bibitem#1\endcsname}%
\let\auto@bib@innerbib\@empty
%</preamble>
\bibitem [{\citenamefont {Alicki}\ and\ \citenamefont
  {Kosloff}(2018)}]{alicki2018introduction}%
  \BibitemOpen
  \bibfield  {author} {\bibinfo {author} {\bibfnamefont {R.}~\bibnamefont
  {Alicki}}\ and\ \bibinfo {author} {\bibfnamefont {R.}~\bibnamefont
  {Kosloff}},\ }\href@noop {} {\bibfield  {journal} {\bibinfo  {journal} {arXiv
  preprint arXiv:1801.08314}\ } (\bibinfo {year} {2018})}\BibitemShut {NoStop}%
\bibitem [{\citenamefont {Elouard}\ \emph {et~al.}(2017)\citenamefont
  {Elouard}, \citenamefont {Herrera-Mart{\'\i}}, \citenamefont {Clusel},\ and\
  \citenamefont {Auff{\`e}ves}}]{elouard2015stochastic}%
  \BibitemOpen
  \bibfield  {author} {\bibinfo {author} {\bibfnamefont {C.}~\bibnamefont
  {Elouard}}, \bibinfo {author} {\bibfnamefont {D.~A.}\ \bibnamefont
  {Herrera-Mart{\'\i}}}, \bibinfo {author} {\bibfnamefont {M.}~\bibnamefont
  {Clusel}}, \ and\ \bibinfo {author} {\bibfnamefont {A.}~\bibnamefont
  {Auff{\`e}ves}},\ }\href@noop {} {\bibfield  {journal} {\bibinfo  {journal}
  {npj Quantum Information}\ }\textbf {\bibinfo {volume} {3}},\ \bibinfo
  {pages} {9} (\bibinfo {year} {2017})}\BibitemShut {NoStop}%
\bibitem [{\citenamefont {Goold}\ \emph {et~al.}(2016)\citenamefont {Goold},
  \citenamefont {Huber}, \citenamefont {Riera}, \citenamefont {del Rio},\ and\
  \citenamefont {Skrzypczyk}}]{goold2016role}%
  \BibitemOpen
  \bibfield  {author} {\bibinfo {author} {\bibfnamefont {J.}~\bibnamefont
  {Goold}}, \bibinfo {author} {\bibfnamefont {M.}~\bibnamefont {Huber}},
  \bibinfo {author} {\bibfnamefont {A.}~\bibnamefont {Riera}}, \bibinfo
  {author} {\bibfnamefont {L.}~\bibnamefont {del Rio}}, \ and\ \bibinfo
  {author} {\bibfnamefont {P.}~\bibnamefont {Skrzypczyk}},\ }\href@noop {}
  {\bibfield  {journal} {\bibinfo  {journal} {Journal of Physics A:
  Mathematical and Theoretical}\ }\textbf {\bibinfo {volume} {49}},\ \bibinfo
  {pages} {143001} (\bibinfo {year} {2016})}\BibitemShut {NoStop}%
\bibitem [{\citenamefont {Janzing}\ \emph {et~al.}(2000)\citenamefont
  {Janzing}, \citenamefont {Wocjan}, \citenamefont {Zeier}, \citenamefont
  {Geiss},\ and\ \citenamefont {Beth}}]{janzing2000thermodynamic}%
  \BibitemOpen
  \bibfield  {author} {\bibinfo {author} {\bibfnamefont {D.}~\bibnamefont
  {Janzing}}, \bibinfo {author} {\bibfnamefont {P.}~\bibnamefont {Wocjan}},
  \bibinfo {author} {\bibfnamefont {R.}~\bibnamefont {Zeier}}, \bibinfo
  {author} {\bibfnamefont {R.}~\bibnamefont {Geiss}}, \ and\ \bibinfo {author}
  {\bibfnamefont {T.}~\bibnamefont {Beth}},\ }\href@noop {} {\bibfield
  {journal} {\bibinfo  {journal} {International Journal of Theoretical
  Physics}\ }\textbf {\bibinfo {volume} {39}},\ \bibinfo {pages} {2717}
  (\bibinfo {year} {2000})}\BibitemShut {NoStop}%
\bibitem [{\citenamefont {Brandao}\ \emph {et~al.}(2013)\citenamefont
  {Brandao}, \citenamefont {Horodecki}, \citenamefont {Oppenheim},
  \citenamefont {Renes},\ and\ \citenamefont {Spekkens}}]{brandao2013resource}%
  \BibitemOpen
  \bibfield  {author} {\bibinfo {author} {\bibfnamefont {F.~G.}\ \bibnamefont
  {Brandao}}, \bibinfo {author} {\bibfnamefont {M.}~\bibnamefont {Horodecki}},
  \bibinfo {author} {\bibfnamefont {J.}~\bibnamefont {Oppenheim}}, \bibinfo
  {author} {\bibfnamefont {J.~M.}\ \bibnamefont {Renes}}, \ and\ \bibinfo
  {author} {\bibfnamefont {R.~W.}\ \bibnamefont {Spekkens}},\ }\href@noop {}
  {\bibfield  {journal} {\bibinfo  {journal} {Physical Review Letters}\
  }\textbf {\bibinfo {volume} {111}},\ \bibinfo {pages} {250404} (\bibinfo
  {year} {2013})}\BibitemShut {NoStop}%
\bibitem [{\citenamefont {DiVincenzo}\ \emph {et~al.}(1999)\citenamefont
  {DiVincenzo}, \citenamefont {Fuchs}, \citenamefont {Mabuchi}, \citenamefont
  {Smolin}, \citenamefont {Thapliyal},\ and\ \citenamefont
  {Uhlmann}}]{divincenzo1999entanglement}%
  \BibitemOpen
  \bibfield  {author} {\bibinfo {author} {\bibfnamefont {D.~P.}\ \bibnamefont
  {DiVincenzo}}, \bibinfo {author} {\bibfnamefont {C.~A.}\ \bibnamefont
  {Fuchs}}, \bibinfo {author} {\bibfnamefont {H.}~\bibnamefont {Mabuchi}},
  \bibinfo {author} {\bibfnamefont {J.~A.}\ \bibnamefont {Smolin}}, \bibinfo
  {author} {\bibfnamefont {A.}~\bibnamefont {Thapliyal}}, \ and\ \bibinfo
  {author} {\bibfnamefont {A.}~\bibnamefont {Uhlmann}},\ }in\ \href@noop {}
  {\emph {\bibinfo {booktitle} {Quantum Computing and Quantum
  Communications}}}\ (\bibinfo  {publisher} {Springer},\ \bibinfo {year}
  {1999})\ pp.\ \bibinfo {pages} {247--257}\BibitemShut {NoStop}%
\bibitem [{\citenamefont {Faist}\ \emph {et~al.}(2015)\citenamefont {Faist},
  \citenamefont {Oppenheim},\ and\ \citenamefont {Renner}}]{faist2015gibbs}%
  \BibitemOpen
  \bibfield  {author} {\bibinfo {author} {\bibfnamefont {P.}~\bibnamefont
  {Faist}}, \bibinfo {author} {\bibfnamefont {J.}~\bibnamefont {Oppenheim}}, \
  and\ \bibinfo {author} {\bibfnamefont {R.}~\bibnamefont {Renner}},\
  }\href@noop {} {\bibfield  {journal} {\bibinfo  {journal} {New Journal of
  Physics}\ }\textbf {\bibinfo {volume} {17}},\ \bibinfo {pages} {043003}
  (\bibinfo {year} {2015})}\BibitemShut {NoStop}%
\bibitem [{\citenamefont {Horodecki}\ and\ \citenamefont
  {Oppenheim}(2013)}]{horodecki2013fundamental}%
  \BibitemOpen
  \bibfield  {author} {\bibinfo {author} {\bibfnamefont {M.}~\bibnamefont
  {Horodecki}}\ and\ \bibinfo {author} {\bibfnamefont {J.}~\bibnamefont
  {Oppenheim}},\ }\href@noop {} {\bibfield  {journal} {\bibinfo  {journal}
  {Nature communications}\ }\textbf {\bibinfo {volume} {4}},\ \bibinfo {pages}
  {2059} (\bibinfo {year} {2013})}\BibitemShut {NoStop}%
\bibitem [{EPA()}]{EPAPS}%
  \BibitemOpen
  \href@noop {} {}\bibinfo {note} {See Supplemental Material at [EPAPS], which
  contains additional references
  \cite{horodecki1999reduction,King2003,HAYASHI,Fannes1973, Audenaert2007,
  Winter2016, vedral1998entanglement, winter2016tight}}\BibitemShut {NoStop}%
\bibitem [{\citenamefont {Narasimhachar}\ and\ \citenamefont
  {Gour}(2017)}]{narasimhachar2017resource}%
  \BibitemOpen
  \bibfield  {author} {\bibinfo {author} {\bibfnamefont {V.}~\bibnamefont
  {Narasimhachar}}\ and\ \bibinfo {author} {\bibfnamefont {G.}~\bibnamefont
  {Gour}},\ }\href@noop {} {\bibfield  {journal} {\bibinfo  {journal} {Physical
  Review A}\ }\textbf {\bibinfo {volume} {95}},\ \bibinfo {pages} {012313}
  (\bibinfo {year} {2017})}\BibitemShut {NoStop}%
\bibitem [{\citenamefont {Henderson}\ and\ \citenamefont
  {Vedral}(2001)}]{henderson2001classical}%
  \BibitemOpen
  \bibfield  {author} {\bibinfo {author} {\bibfnamefont {L.}~\bibnamefont
  {Henderson}}\ and\ \bibinfo {author} {\bibfnamefont {V.}~\bibnamefont
  {Vedral}},\ }\href@noop {} {\bibfield  {journal} {\bibinfo  {journal}
  {Journal of physics A: mathematical and general}\ }\textbf {\bibinfo {volume}
  {34}},\ \bibinfo {pages} {6899} (\bibinfo {year} {2001})}\BibitemShut
  {NoStop}%
\bibitem [{\citenamefont {Manzano}\ \emph {et~al.}(2018)\citenamefont
  {Manzano}, \citenamefont {Plastina},\ and\ \citenamefont
  {Zambrini}}]{PhysRevLett.121.120602}%
  \BibitemOpen
  \bibfield  {author} {\bibinfo {author} {\bibfnamefont {G.}~\bibnamefont
  {Manzano}}, \bibinfo {author} {\bibfnamefont {F.}~\bibnamefont {Plastina}}, \
  and\ \bibinfo {author} {\bibfnamefont {R.}~\bibnamefont {Zambrini}},\ }\href
  {\doibase 10.1103/PhysRevLett.121.120602} {\bibfield  {journal} {\bibinfo
  {journal} {Phys. Rev. Lett.}\ }\textbf {\bibinfo {volume} {121}},\ \bibinfo
  {pages} {120602} (\bibinfo {year} {2018})}\BibitemShut {NoStop}%
\bibitem [{\citenamefont {Devetak}\ and\ \citenamefont
  {Winter}(2004)}]{devetak2004distilling}%
  \BibitemOpen
  \bibfield  {author} {\bibinfo {author} {\bibfnamefont {I.}~\bibnamefont
  {Devetak}}\ and\ \bibinfo {author} {\bibfnamefont {A.}~\bibnamefont
  {Winter}},\ }\href@noop {} {\bibfield  {journal} {\bibinfo  {journal} {IEEE
  Transactions on Information Theory}\ }\textbf {\bibinfo {volume} {50}},\
  \bibinfo {pages} {3183} (\bibinfo {year} {2004})}\BibitemShut {NoStop}%
\bibitem [{\citenamefont {Oppenheim}\ \emph {et~al.}(2002)\citenamefont
  {Oppenheim}, \citenamefont {Horodecki}, \citenamefont {Horodecki},\ and\
  \citenamefont {Horodecki}}]{oppenheim2002thermodynamical}%
  \BibitemOpen
  \bibfield  {author} {\bibinfo {author} {\bibfnamefont {J.}~\bibnamefont
  {Oppenheim}}, \bibinfo {author} {\bibfnamefont {M.}~\bibnamefont
  {Horodecki}}, \bibinfo {author} {\bibfnamefont {P.}~\bibnamefont
  {Horodecki}}, \ and\ \bibinfo {author} {\bibfnamefont {R.}~\bibnamefont
  {Horodecki}},\ }\href@noop {} {\bibfield  {journal} {\bibinfo  {journal}
  {Physical Review Letters}\ }\textbf {\bibinfo {volume} {89}},\ \bibinfo
  {pages} {180402} (\bibinfo {year} {2002})}\BibitemShut {NoStop}%
\bibitem [{\citenamefont {Koashi}\ and\ \citenamefont
  {Winter}(2004)}]{koashi2004monogamy}%
  \BibitemOpen
  \bibfield  {author} {\bibinfo {author} {\bibfnamefont {M.}~\bibnamefont
  {Koashi}}\ and\ \bibinfo {author} {\bibfnamefont {A.}~\bibnamefont
  {Winter}},\ }\href@noop {} {\bibfield  {journal} {\bibinfo  {journal}
  {Physical Review A}\ }\textbf {\bibinfo {volume} {69}},\ \bibinfo {pages}
  {022309} (\bibinfo {year} {2004})}\BibitemShut {NoStop}%
\bibitem [{\citenamefont {Bennett}\ \emph {et~al.}(1996)\citenamefont
  {Bennett}, \citenamefont {Brassard}, \citenamefont {Popescu}, \citenamefont
  {Schumacher}, \citenamefont {Smolin},\ and\ \citenamefont
  {Wootters}}]{bennett1996purification}%
  \BibitemOpen
  \bibfield  {author} {\bibinfo {author} {\bibfnamefont {C.~H.}\ \bibnamefont
  {Bennett}}, \bibinfo {author} {\bibfnamefont {G.}~\bibnamefont {Brassard}},
  \bibinfo {author} {\bibfnamefont {S.}~\bibnamefont {Popescu}}, \bibinfo
  {author} {\bibfnamefont {B.}~\bibnamefont {Schumacher}}, \bibinfo {author}
  {\bibfnamefont {J.~A.}\ \bibnamefont {Smolin}}, \ and\ \bibinfo {author}
  {\bibfnamefont {W.~K.}\ \bibnamefont {Wootters}},\ }\href@noop {} {\bibfield
  {journal} {\bibinfo  {journal} {Physical Review Letters}\ }\textbf {\bibinfo
  {volume} {76}},\ \bibinfo {pages} {722} (\bibinfo {year} {1996})}\BibitemShut
  {NoStop}%
\bibitem [{\citenamefont {Hayden}\ \emph {et~al.}(2001)\citenamefont {Hayden},
  \citenamefont {Horodecki},\ and\ \citenamefont
  {Terhal}}]{hayden2001asymptotic}%
  \BibitemOpen
  \bibfield  {author} {\bibinfo {author} {\bibfnamefont {P.~M.}\ \bibnamefont
  {Hayden}}, \bibinfo {author} {\bibfnamefont {M.}~\bibnamefont {Horodecki}}, \
  and\ \bibinfo {author} {\bibfnamefont {B.~M.}\ \bibnamefont {Terhal}},\
  }\href@noop {} {\bibfield  {journal} {\bibinfo  {journal} {Journal of Physics
  A: Mathematical and General}\ }\textbf {\bibinfo {volume} {34}},\ \bibinfo
  {pages} {6891} (\bibinfo {year} {2001})}\BibitemShut {NoStop}%
\bibitem [{\citenamefont {Hastings}(2008)}]{hastings2008counterexample}%
  \BibitemOpen
  \bibfield  {author} {\bibinfo {author} {\bibfnamefont {M.~B.}\ \bibnamefont
  {Hastings}},\ }\href@noop {} {\bibfield  {journal} {\bibinfo  {journal}
  {arXiv preprint arXiv:0809.3972}\ } (\bibinfo {year} {2008})}\BibitemShut
  {NoStop}%
\bibitem [{\citenamefont {Devetak}\ and\ \citenamefont
  {Berger}(2001)}]{devetak2001low}%
  \BibitemOpen
  \bibfield  {author} {\bibinfo {author} {\bibfnamefont {I.}~\bibnamefont
  {Devetak}}\ and\ \bibinfo {author} {\bibfnamefont {T.}~\bibnamefont
  {Berger}},\ }\href@noop {} {\bibfield  {journal} {\bibinfo  {journal}
  {Physical Review Letters}\ }\textbf {\bibinfo {volume} {87}},\ \bibinfo
  {pages} {197901} (\bibinfo {year} {2001})}\BibitemShut {NoStop}%
\bibitem [{\citenamefont {Ollivier}\ and\ \citenamefont
  {Zurek}(2001)}]{ollivier2001quantum}%
  \BibitemOpen
  \bibfield  {author} {\bibinfo {author} {\bibfnamefont {H.}~\bibnamefont
  {Ollivier}}\ and\ \bibinfo {author} {\bibfnamefont {W.~H.}\ \bibnamefont
  {Zurek}},\ }\href@noop {} {\bibfield  {journal} {\bibinfo  {journal}
  {Physical Review Letters}\ }\textbf {\bibinfo {volume} {88}},\ \bibinfo
  {pages} {017901} (\bibinfo {year} {2001})}\BibitemShut {NoStop}%
\bibitem [{\citenamefont {Francica}\ \emph {et~al.}(2017)\citenamefont
  {Francica}, \citenamefont {Goold}, \citenamefont {Plastina},\ and\
  \citenamefont {Paternostro}}]{francica2017daemonic}%
  \BibitemOpen
  \bibfield  {author} {\bibinfo {author} {\bibfnamefont {G.}~\bibnamefont
  {Francica}}, \bibinfo {author} {\bibfnamefont {J.}~\bibnamefont {Goold}},
  \bibinfo {author} {\bibfnamefont {F.}~\bibnamefont {Plastina}}, \ and\
  \bibinfo {author} {\bibfnamefont {M.}~\bibnamefont {Paternostro}},\
  }\href@noop {} {\bibfield  {journal} {\bibinfo  {journal} {npj Quantum
  Information}\ }\textbf {\bibinfo {volume} {3}},\ \bibinfo {pages} {12}
  (\bibinfo {year} {2017})}\BibitemShut {NoStop}%
\bibitem [{\citenamefont {Zurek}(2003)}]{zurek2003quantum}%
  \BibitemOpen
  \bibfield  {author} {\bibinfo {author} {\bibfnamefont {W.~H.}\ \bibnamefont
  {Zurek}},\ }\href@noop {} {\bibfield  {journal} {\bibinfo  {journal}
  {Physical Review A}\ }\textbf {\bibinfo {volume} {67}},\ \bibinfo {pages}
  {012320} (\bibinfo {year} {2003})}\BibitemShut {NoStop}%
\bibitem [{\citenamefont {Liuzzo-Scorpo}\ \emph {et~al.}(2016)\citenamefont
  {Liuzzo-Scorpo}, \citenamefont {Correa}, \citenamefont {Schmidt},\ and\
  \citenamefont {Adesso}}]{liuzzo2016thermodynamics}%
  \BibitemOpen
  \bibfield  {author} {\bibinfo {author} {\bibfnamefont {P.}~\bibnamefont
  {Liuzzo-Scorpo}}, \bibinfo {author} {\bibfnamefont {L.~A.}\ \bibnamefont
  {Correa}}, \bibinfo {author} {\bibfnamefont {R.}~\bibnamefont {Schmidt}}, \
  and\ \bibinfo {author} {\bibfnamefont {G.}~\bibnamefont {Adesso}},\
  }\href@noop {} {\bibfield  {journal} {\bibinfo  {journal} {Entropy}\ }\textbf
  {\bibinfo {volume} {18}},\ \bibinfo {pages} {48} (\bibinfo {year}
  {2016})}\BibitemShut {NoStop}%
\bibitem [{\citenamefont {Adesso}\ \emph {et~al.}(2016)\citenamefont {Adesso},
  \citenamefont {Bromley},\ and\ \citenamefont
  {Cianciaruso}}]{adesso2016measures}%
  \BibitemOpen
  \bibfield  {author} {\bibinfo {author} {\bibfnamefont {G.}~\bibnamefont
  {Adesso}}, \bibinfo {author} {\bibfnamefont {T.~R.}\ \bibnamefont {Bromley}},
  \ and\ \bibinfo {author} {\bibfnamefont {M.}~\bibnamefont {Cianciaruso}},\
  }\href@noop {} {\bibfield  {journal} {\bibinfo  {journal} {Journal of Physics
  A: Mathematical and Theoretical}\ }\textbf {\bibinfo {volume} {49}},\
  \bibinfo {pages} {473001} (\bibinfo {year} {2016})}\BibitemShut {NoStop}%
\bibitem [{\citenamefont {Chitambar}\ \emph {et~al.}(2016)\citenamefont
  {Chitambar}, \citenamefont {Streltsov}, \citenamefont {Rana}, \citenamefont
  {Bera}, \citenamefont {Adesso},\ and\ \citenamefont
  {Lewenstein}}]{chitambar2016assisted}%
  \BibitemOpen
  \bibfield  {author} {\bibinfo {author} {\bibfnamefont {E.}~\bibnamefont
  {Chitambar}}, \bibinfo {author} {\bibfnamefont {A.}~\bibnamefont
  {Streltsov}}, \bibinfo {author} {\bibfnamefont {S.}~\bibnamefont {Rana}},
  \bibinfo {author} {\bibfnamefont {M.}~\bibnamefont {Bera}}, \bibinfo {author}
  {\bibfnamefont {G.}~\bibnamefont {Adesso}}, \ and\ \bibinfo {author}
  {\bibfnamefont {M.}~\bibnamefont {Lewenstein}},\ }\href@noop {} {\bibfield
  {journal} {\bibinfo  {journal} {Physical Review Letters}\ }\textbf {\bibinfo
  {volume} {116}},\ \bibinfo {pages} {070402} (\bibinfo {year}
  {2016})}\BibitemShut {NoStop}%
\bibitem [{\citenamefont {Horodecki}\ \emph {et~al.}(2003)\citenamefont
  {Horodecki}, \citenamefont {Horodecki}, \citenamefont {Horodecki},
  \citenamefont {Horodecki}, \citenamefont {Oppenheim}, \citenamefont {Sen},
  \citenamefont {Sen} \emph {et~al.}}]{horodecki2003local}%
  \BibitemOpen
  \bibfield  {author} {\bibinfo {author} {\bibfnamefont {M.}~\bibnamefont
  {Horodecki}}, \bibinfo {author} {\bibfnamefont {K.}~\bibnamefont
  {Horodecki}}, \bibinfo {author} {\bibfnamefont {P.}~\bibnamefont
  {Horodecki}}, \bibinfo {author} {\bibfnamefont {R.}~\bibnamefont
  {Horodecki}}, \bibinfo {author} {\bibfnamefont {J.}~\bibnamefont
  {Oppenheim}}, \bibinfo {author} {\bibfnamefont {A.}~\bibnamefont {Sen}},
  \bibinfo {author} {\bibfnamefont {U.}~\bibnamefont {Sen}},  \emph {et~al.},\
  }\href@noop {} {\bibfield  {journal} {\bibinfo  {journal} {Physical review
  letters}\ }\textbf {\bibinfo {volume} {90}},\ \bibinfo {pages} {100402}
  (\bibinfo {year} {2003})}\BibitemShut {NoStop}%
\bibitem [{\citenamefont {Devetak}(2005)}]{devetak2005distillation}%
  \BibitemOpen
  \bibfield  {author} {\bibinfo {author} {\bibfnamefont {I.}~\bibnamefont
  {Devetak}},\ }\href@noop {} {\bibfield  {journal} {\bibinfo  {journal}
  {Physical Review A}\ }\textbf {\bibinfo {volume} {71}},\ \bibinfo {pages}
  {062303} (\bibinfo {year} {2005})}\BibitemShut {NoStop}%
\bibitem [{\citenamefont {Regula}\ \emph {et~al.}(2018)\citenamefont {Regula},
  \citenamefont {Lami},\ and\ \citenamefont {Streltsov}}]{bartosz-myself-alex}%
  \BibitemOpen
  \bibfield  {author} {\bibinfo {author} {\bibfnamefont {B.}~\bibnamefont
  {Regula}}, \bibinfo {author} {\bibfnamefont {L.}~\bibnamefont {Lami}}, \ and\
  \bibinfo {author} {\bibfnamefont {A.}~\bibnamefont {Streltsov}},\ }\href@noop
  {} {\bibfield  {journal} {\bibinfo  {journal} {Physical Review A}\ }\textbf
  {\bibinfo {volume} {98}},\ \bibinfo {pages} {052329} (\bibinfo {year}
  {2018})}\BibitemShut {NoStop}%
\bibitem [{\citenamefont {Streltsov}\ \emph
  {et~al.}(2017{\natexlab{a}})\citenamefont {Streltsov}, \citenamefont {Rana},
  \citenamefont {Bera},\ and\ \citenamefont
  {Lewenstein}}]{streltsov2017towards}%
  \BibitemOpen
  \bibfield  {author} {\bibinfo {author} {\bibfnamefont {A.}~\bibnamefont
  {Streltsov}}, \bibinfo {author} {\bibfnamefont {S.}~\bibnamefont {Rana}},
  \bibinfo {author} {\bibfnamefont {M.~N.}\ \bibnamefont {Bera}}, \ and\
  \bibinfo {author} {\bibfnamefont {M.}~\bibnamefont {Lewenstein}},\
  }\href@noop {} {\bibfield  {journal} {\bibinfo  {journal} {Physical Review
  X}\ }\textbf {\bibinfo {volume} {7}},\ \bibinfo {pages} {011024} (\bibinfo
  {year} {2017}{\natexlab{a}})}\BibitemShut {NoStop}%
\bibitem [{\citenamefont {Streltsov}\ \emph
  {et~al.}(2017{\natexlab{b}})\citenamefont {Streltsov}, \citenamefont
  {Adesso},\ and\ \citenamefont {Plenio}}]{streltsov2017colloquium}%
  \BibitemOpen
  \bibfield  {author} {\bibinfo {author} {\bibfnamefont {A.}~\bibnamefont
  {Streltsov}}, \bibinfo {author} {\bibfnamefont {G.}~\bibnamefont {Adesso}}, \
  and\ \bibinfo {author} {\bibfnamefont {M.~B.}\ \bibnamefont {Plenio}},\
  }\href@noop {} {\bibfield  {journal} {\bibinfo  {journal} {Reviews of Modern
  Physics}\ }\textbf {\bibinfo {volume} {89}},\ \bibinfo {pages} {041003}
  (\bibinfo {year} {2017}{\natexlab{b}})}\BibitemShut {NoStop}%
\bibitem [{\citenamefont {Baumgratz}\ \emph {et~al.}(2014)\citenamefont
  {Baumgratz}, \citenamefont {Cramer},\ and\ \citenamefont
  {Plenio}}]{baumgratz2014quantifying}%
  \BibitemOpen
  \bibfield  {author} {\bibinfo {author} {\bibfnamefont {T.}~\bibnamefont
  {Baumgratz}}, \bibinfo {author} {\bibfnamefont {M.}~\bibnamefont {Cramer}}, \
  and\ \bibinfo {author} {\bibfnamefont {M.~B.}\ \bibnamefont {Plenio}},\
  }\href@noop {} {\bibfield  {journal} {\bibinfo  {journal} {Physical review
  letters}\ }\textbf {\bibinfo {volume} {113}},\ \bibinfo {pages} {140401}
  (\bibinfo {year} {2014})}\BibitemShut {NoStop}%
\bibitem [{\citenamefont {Winter}\ and\ \citenamefont
  {Yang}(2016)}]{winter2016operational}%
  \BibitemOpen
  \bibfield  {author} {\bibinfo {author} {\bibfnamefont {A.}~\bibnamefont
  {Winter}}\ and\ \bibinfo {author} {\bibfnamefont {D.}~\bibnamefont {Yang}},\
  }\href@noop {} {\bibfield  {journal} {\bibinfo  {journal} {Physical review
  letters}\ }\textbf {\bibinfo {volume} {116}},\ \bibinfo {pages} {120404}
  (\bibinfo {year} {2016})}\BibitemShut {NoStop}%
\bibitem [{\citenamefont {Szilard}(1929)}]{szilard1929entropieverminderung}%
  \BibitemOpen
  \bibfield  {author} {\bibinfo {author} {\bibfnamefont {L.}~\bibnamefont
  {Szilard}},\ }\href@noop {} {\bibfield  {journal} {\bibinfo  {journal}
  {Zeitschrift f{\"u}r Physik}\ }\textbf {\bibinfo {volume} {53}},\ \bibinfo
  {pages} {840} (\bibinfo {year} {1929})}\BibitemShut {NoStop}%
\bibitem [{\citenamefont {Landauer}(1961)}]{landauer1961irreversibility}%
  \BibitemOpen
  \bibfield  {author} {\bibinfo {author} {\bibfnamefont {R.}~\bibnamefont
  {Landauer}},\ }\href@noop {} {\bibfield  {journal} {\bibinfo  {journal} {IBM
  journal of research and development}\ }\textbf {\bibinfo {volume} {5}},\
  \bibinfo {pages} {183} (\bibinfo {year} {1961})}\BibitemShut {NoStop}%
\bibitem [{\citenamefont {Kim}\ \emph {et~al.}(2011)\citenamefont {Kim},
  \citenamefont {Sagawa}, \citenamefont {De~Liberato},\ and\ \citenamefont
  {Ueda}}]{kim2011quantum}%
  \BibitemOpen
  \bibfield  {author} {\bibinfo {author} {\bibfnamefont {S.~W.}\ \bibnamefont
  {Kim}}, \bibinfo {author} {\bibfnamefont {T.}~\bibnamefont {Sagawa}},
  \bibinfo {author} {\bibfnamefont {S.}~\bibnamefont {De~Liberato}}, \ and\
  \bibinfo {author} {\bibfnamefont {M.}~\bibnamefont {Ueda}},\ }\href@noop {}
  {\bibfield  {journal} {\bibinfo  {journal} {Physical Review Letters}\
  }\textbf {\bibinfo {volume} {106}},\ \bibinfo {pages} {070401} (\bibinfo
  {year} {2011})}\BibitemShut {NoStop}%
\bibitem [{\citenamefont {Mohammady}\ and\ \citenamefont
  {Anders}(2017)}]{mohammady2017quantum}%
  \BibitemOpen
  \bibfield  {author} {\bibinfo {author} {\bibfnamefont {M.~H.}\ \bibnamefont
  {Mohammady}}\ and\ \bibinfo {author} {\bibfnamefont {J.}~\bibnamefont
  {Anders}},\ }\href@noop {} {\bibfield  {journal} {\bibinfo  {journal} {New
  Journal of Physics}\ }\textbf {\bibinfo {volume} {19}},\ \bibinfo {pages}
  {113026} (\bibinfo {year} {2017})}\BibitemShut {NoStop}%
\bibitem [{\citenamefont {Cottet}\ \emph {et~al.}(2017)\citenamefont {Cottet},
  \citenamefont {Jezouin}, \citenamefont {Bretheau}, \citenamefont
  {Campagne-Ibarcq}, \citenamefont {Ficheux}, \citenamefont {Anders},
  \citenamefont {Auff{\`e}ves}, \citenamefont {Azouit}, \citenamefont
  {Rouchon},\ and\ \citenamefont {Huard}}]{cottet2017observing}%
  \BibitemOpen
  \bibfield  {author} {\bibinfo {author} {\bibfnamefont {N.}~\bibnamefont
  {Cottet}}, \bibinfo {author} {\bibfnamefont {S.}~\bibnamefont {Jezouin}},
  \bibinfo {author} {\bibfnamefont {L.}~\bibnamefont {Bretheau}}, \bibinfo
  {author} {\bibfnamefont {P.}~\bibnamefont {Campagne-Ibarcq}}, \bibinfo
  {author} {\bibfnamefont {Q.}~\bibnamefont {Ficheux}}, \bibinfo {author}
  {\bibfnamefont {J.}~\bibnamefont {Anders}}, \bibinfo {author} {\bibfnamefont
  {A.}~\bibnamefont {Auff{\`e}ves}}, \bibinfo {author} {\bibfnamefont
  {R.}~\bibnamefont {Azouit}}, \bibinfo {author} {\bibfnamefont
  {P.}~\bibnamefont {Rouchon}}, \ and\ \bibinfo {author} {\bibfnamefont
  {B.}~\bibnamefont {Huard}},\ }\href@noop {} {\bibfield  {journal} {\bibinfo
  {journal} {Proceedings of the National Academy of Sciences}\ }\textbf
  {\bibinfo {volume} {114}},\ \bibinfo {pages} {7561} (\bibinfo {year}
  {2017})}\BibitemShut {NoStop}%
\bibitem [{\citenamefont {Park}\ \emph {et~al.}(2013)\citenamefont {Park},
  \citenamefont {Kim}, \citenamefont {Sagawa},\ and\ \citenamefont
  {Kim}}]{park2013heat}%
  \BibitemOpen
  \bibfield  {author} {\bibinfo {author} {\bibfnamefont {J.~J.}\ \bibnamefont
  {Park}}, \bibinfo {author} {\bibfnamefont {K.-H.}\ \bibnamefont {Kim}},
  \bibinfo {author} {\bibfnamefont {T.}~\bibnamefont {Sagawa}}, \ and\ \bibinfo
  {author} {\bibfnamefont {S.~W.}\ \bibnamefont {Kim}},\ }\href@noop {}
  {\bibfield  {journal} {\bibinfo  {journal} {Physical Review Letters}\
  }\textbf {\bibinfo {volume} {111}},\ \bibinfo {pages} {230402} (\bibinfo
  {year} {2013})}\BibitemShut {NoStop}%
\bibitem [{\citenamefont {Zurek}(1986)}]{zurek1986maxwell}%
  \BibitemOpen
  \bibfield  {author} {\bibinfo {author} {\bibfnamefont {W.~H.}\ \bibnamefont
  {Zurek}},\ }in\ \href@noop {} {\emph {\bibinfo {booktitle} {Frontiers of
  nonequilibrium statistical physics}}}\ (\bibinfo  {publisher} {Springer},\
  \bibinfo {year} {1986})\ pp.\ \bibinfo {pages} {151--161}\BibitemShut
  {NoStop}%
\bibitem [{\citenamefont {Reeb}\ and\ \citenamefont
  {Wolf}(2014)}]{reeb2014improved}%
  \BibitemOpen
  \bibfield  {author} {\bibinfo {author} {\bibfnamefont {D.}~\bibnamefont
  {Reeb}}\ and\ \bibinfo {author} {\bibfnamefont {M.~M.}\ \bibnamefont
  {Wolf}},\ }\href@noop {} {\bibfield  {journal} {\bibinfo  {journal} {New
  Journal of Physics}\ }\textbf {\bibinfo {volume} {16}},\ \bibinfo {pages}
  {103011} (\bibinfo {year} {2014})}\BibitemShut {NoStop}%
\bibitem [{\citenamefont {Sagawa}\ and\ \citenamefont
  {Ueda}(2008)}]{sagawa2008second}%
  \BibitemOpen
  \bibfield  {author} {\bibinfo {author} {\bibfnamefont {T.}~\bibnamefont
  {Sagawa}}\ and\ \bibinfo {author} {\bibfnamefont {M.}~\bibnamefont {Ueda}},\
  }\href@noop {} {\bibfield  {journal} {\bibinfo  {journal} {Physical Review
  Letters}\ }\textbf {\bibinfo {volume} {100}},\ \bibinfo {pages} {080403}
  (\bibinfo {year} {2008})}\BibitemShut {NoStop}%
\bibitem [{\citenamefont {Narasimhachar}\ \emph {et~al.}(2018)\citenamefont
  {Narasimhachar}, \citenamefont {Thompson}, \citenamefont {Ma}, \citenamefont
  {Gour},\ and\ \citenamefont {Gu}}]{narasimhachar2018quantifying}%
  \BibitemOpen
  \bibfield  {author} {\bibinfo {author} {\bibfnamefont {V.}~\bibnamefont
  {Narasimhachar}}, \bibinfo {author} {\bibfnamefont {J.}~\bibnamefont
  {Thompson}}, \bibinfo {author} {\bibfnamefont {J.}~\bibnamefont {Ma}},
  \bibinfo {author} {\bibfnamefont {G.}~\bibnamefont {Gour}}, \ and\ \bibinfo
  {author} {\bibfnamefont {M.}~\bibnamefont {Gu}},\ }\href@noop {} {\bibfield
  {journal} {\bibinfo  {journal} {arXiv preprint arXiv:1806.00025}\ } (\bibinfo
  {year} {2018})}\BibitemShut {NoStop}%
\bibitem [{\citenamefont {Horodecki}\ and\ \citenamefont
  {Horodecki}(1999)}]{horodecki1999reduction}%
  \BibitemOpen
  \bibfield  {author} {\bibinfo {author} {\bibfnamefont {M.}~\bibnamefont
  {Horodecki}}\ and\ \bibinfo {author} {\bibfnamefont {P.}~\bibnamefont
  {Horodecki}},\ }\href@noop {} {\bibfield  {journal} {\bibinfo  {journal}
  {Physical Review A}\ }\textbf {\bibinfo {volume} {59}},\ \bibinfo {pages}
  {4206} (\bibinfo {year} {1999})}\BibitemShut {NoStop}%
\bibitem [{\citenamefont {King}(2003)}]{King2003}%
  \BibitemOpen
  \bibfield  {author} {\bibinfo {author} {\bibfnamefont {C.}~\bibnamefont
  {King}},\ }\href {\doibase 10.1109/TIT.2002.806153} {\bibfield  {journal}
  {\bibinfo  {journal} {IEEE Trans. Inf. Theory}\ }\textbf {\bibinfo {volume}
  {49}},\ \bibinfo {pages} {221} (\bibinfo {year} {2003})}\BibitemShut
  {NoStop}%
\bibitem [{\citenamefont {Hayashi}(2016)}]{HAYASHI}%
  \BibitemOpen
  \bibfield  {author} {\bibinfo {author} {\bibfnamefont {M.}~\bibnamefont
  {Hayashi}},\ }\href@noop {} {\emph {\bibinfo {title} {Quantum Information
  Theory: Mathematical Foundation}}},\ Graduate Texts in Physics\ (\bibinfo
  {publisher} {Springer Berlin Heidelberg},\ \bibinfo {year}
  {2016})\BibitemShut {NoStop}%
\bibitem [{\citenamefont {Fannes}(1973)}]{Fannes1973}%
  \BibitemOpen
  \bibfield  {author} {\bibinfo {author} {\bibfnamefont {M.}~\bibnamefont
  {Fannes}},\ }\href {\doibase 10.1007/BF01646490} {\bibfield  {journal}
  {\bibinfo  {journal} {Commun. Math. Phys.}\ }\textbf {\bibinfo {volume}
  {31}},\ \bibinfo {pages} {291} (\bibinfo {year} {1973})}\BibitemShut
  {NoStop}%
\bibitem [{\citenamefont {Audenaert}(2007)}]{Audenaert2007}%
  \BibitemOpen
  \bibfield  {author} {\bibinfo {author} {\bibfnamefont {K.}~\bibnamefont
  {Audenaert}},\ }\href@noop {} {\bibfield  {journal} {\bibinfo  {journal} {J.
  Phys. A}\ }\textbf {\bibinfo {volume} {40}},\ \bibinfo {pages} {8127}
  (\bibinfo {year} {2007})}\BibitemShut {NoStop}%
\bibitem [{\citenamefont {Winter}(2016{\natexlab{a}})}]{Winter2016}%
  \BibitemOpen
  \bibfield  {author} {\bibinfo {author} {\bibfnamefont {A.}~\bibnamefont
  {Winter}},\ }\href {\doibase 10.1007/s00220-016-2609-8} {\bibfield  {journal}
  {\bibinfo  {journal} {Commun. Math. Phys.}\ }\textbf {\bibinfo {volume}
  {347}},\ \bibinfo {pages} {291} (\bibinfo {year}
  {2016}{\natexlab{a}})}\BibitemShut {NoStop}%
\bibitem [{\citenamefont {Vedral}\ and\ \citenamefont
  {Plenio}(1998)}]{vedral1998entanglement}%
  \BibitemOpen
  \bibfield  {author} {\bibinfo {author} {\bibfnamefont {V.}~\bibnamefont
  {Vedral}}\ and\ \bibinfo {author} {\bibfnamefont {M.~B.}\ \bibnamefont
  {Plenio}},\ }\href@noop {} {\bibfield  {journal} {\bibinfo  {journal}
  {Physical Review A}\ }\textbf {\bibinfo {volume} {57}},\ \bibinfo {pages}
  {1619} (\bibinfo {year} {1998})}\BibitemShut {NoStop}%
\bibitem [{\citenamefont {Winter}(2016{\natexlab{b}})}]{winter2016tight}%
  \BibitemOpen
  \bibfield  {author} {\bibinfo {author} {\bibfnamefont {A.}~\bibnamefont
  {Winter}},\ }\href@noop {} {\bibfield  {journal} {\bibinfo  {journal}
  {Communications in Mathematical Physics}\ }\textbf {\bibinfo {volume}
  {347}},\ \bibinfo {pages} {291} (\bibinfo {year}
  {2016}{\natexlab{b}})}\BibitemShut {NoStop}%
\end{thebibliography}

%merlin.mbs apsrev4-1.bst 2010-07-25 4.21a (PWD, AO, DPC) hacked
%Control: key (0)
%Control: author (72) initials jnrlst
%Control: editor formatted (1) identically to author
%Control: production of article title (-1) disabled
%Control: page (0) single
%Control: year (1) truncated
%Control: production of eprint (0) enabled
%

%\end{document}

%%%%%%%%%%%%%%%%%%%%%%%%%%%%%%%%%%%%%%%%%%%%%%%%%%%%%%%%%%%%%%%%%%%%%%%%%%%%%%%%%%%%%%%%%%%%%%%%%%%%%%%%%%%%%%%%%%%%

\clearpage

\onecolumngrid
\appendix
 \setcounter{page}{1}
\begin{center}
\vspace*{\baselineskip}
{\textbf{\large Assisted Work Distillation: Supplemental Material}}\\
\end{center}
%\twocolumngrid

%\renewcommand{\theequation}{S\arabic{equation}}
%\setcounter{equation}{0}
%\setcounter{figure}{0}
%\setcounter{table}{0}
%\setcounter{section}{0}
%\setcounter{page}{1}
%\makeatletter

\section{Distillable work under thermal operations} \label{worksupreme}

Here we present an explicit proof of Eq.~\eqref{workisfreeenergy} in the main text, formalizing the connection between the distillable work under TO and the relative entropy of athermality. In what follows, will drop the system subscript $B$ for simplicity.

We start by recalling the main result of~\cite{brandao2013resource}: given an initial state $\omega$ of a system with Hamiltonian $H$, and a target state $\omega'$ of a system with Hamiltonian $H'$, the asymptotically achievable rates $R$ in a state transformation $\omega^{\otimes n}\to(\omega')^{\otimes [Rn]}$ operated by TO at background inverse temperature $\beta$ with vanishing error must satisfy
\begin{equation}
S\left(\omega^{\otimes n} \Big\| \gamma^{\otimes n}\right) \geq S\left((\omega')^{\otimes [Rn]} \Big\| (\gamma')^{\otimes [Rn]}\right) ,
\end{equation}
where $\gamma\coloneqq Z^{-1} e^{-\beta H}$ and $\gamma'\coloneqq (Z')^{-1} e^{-\beta H'}$ are the Gibbs states of the input and output systems, respectively. In our case, $\omega^{\otimes n} = \rho^{\otimes n} \otimes \ketbra{0}_P^{\otimes [Rn]}$ and $(\omega')^{\otimes [Rn]} = \ketbra{1}_P^{\otimes [Rn]}$, so that
\begin{align*}
0 &\leq S\left( \rho^{\otimes n} \otimes \ketbra{0}_P^{\otimes [Rn]} \Big\| \gamma^{\otimes n} \otimes \gamma_P^{\otimes [Rn]} \right) - S\left( \ketbra{1}_P^{\otimes [Rn]} \Big\| \gamma_P^{\otimes [Rn]} \right) \\[.5ex]
&= n S(\rho\|\gamma) + [Rn]\, S\left(\ketbra{0}_P\|\gamma_P\right) - [Rn]\, S\left(\ketbra{1}_P\|\gamma_P\right) \\
&= n S(\rho\|\gamma) + [Rn]\, \log \left( 1+e^{-\beta E}\right) - [Rn]\, \log \left( \frac{1+e^{-\beta E}}{e^{-\beta E}}\right) \\
&= n S(\rho\|\gamma) - [Rn]\, \beta E\, ,
\end{align*}
where we observed that for the qubit battery one has
\begin{equation}
\gamma_P = \frac{1}{1+e^{-\beta E}} \begin{pmatrix} 1 & 0 \\ 0 & e^{-\beta E} \end{pmatrix} .
\label{Gibbs battery}
\end{equation}
We deduce that
\begin{equation*}
RE\leq \frac{1}{\beta} S(\rho\|\gamma)\, .
\end{equation*}
Taking the supremum as dictated by Eq.~\eqref{distillablework} yields Eq.~\eqref{workisfreeenergy}, as claimed.

\section{Bipartite quantum thermal states} \label{bipart}
\noindent As we have seen in the main text, the assisted framework for work distillation allows Alice to perform any given POVM $\{\Pi_{A,i}\}_i$. It is natural to ask, under what conditions the assistance by Alice is a valuable resource that helps to distil more work. In this section we show that this is indeed the case whenever the bipartite state Alice and Bob share is not a product state. In other words, any state which cannot be created for free via the allowed operations constitutes a resource for extracting work on Bob's side.

\begin{proposition} \label{assistance helps}
Whenever $\rho_{AB}\neq \rho_A\otimes \rho_B$ is not factorized, we have that
\begin{equation}
    W_a^{B|A}(\rho_{AB}) > \frac{1}{\beta} S(\rho_B\|\gamma_B)\, .
\end{equation}
In other words, the assistance by Alice allows to extract more work than Bob could in the unassisted setting.
\end{proposition}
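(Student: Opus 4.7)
The plan is to exploit the identity $W_a^{B|A}(\rho_{AB}) = \frac{1}{\beta}\bigl(S(\rho_B\|\gamma_B) + J^{\rightarrow}(\rho_{AB})\bigr)$ already established in Eq.~\eqref{workass4} and reduce the statement to showing that the Henderson--Vedral classical correlation $J^{\rightarrow}$ is strictly positive on every non-product state.

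First, by Eq.~\eqref{workass4} the claim is equivalent to $J^{\rightarrow}(\rho_{AB}) > 0$. I proceed by contrapositive: assume $J^{\rightarrow}(\rho_{AB}) = 0$ and deduce that $\rho_{AB} = \rho_A \otimes \rho_B$. Since $\sum_i p_i\,\tilde{\rho}_{B,i} = \rho_B$ for any POVM $\{\Pi_{A,i}\}$, concavity of the von Neumann entropy gives $\sum_i p_i S(\tilde{\rho}_{B,i}) \le S(\rho_B)$, so the expression inside the maximum in the definition of $J^{\rightarrow}$ is always non-negative, and its vanishing must be realised on every POVM on $A$.

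Next I would invoke the Holevo identity
\[
    S(\rho_B) - \sum_i p_i\, S(\tilde{\rho}_{B,i}) \;=\; \sum_i p_i\, S\!\left(\tilde{\rho}_{B,i}\,\|\,\rho_B\right),
\]
which together with the strict positivity of the relative entropy forces $\tilde{\rho}_{B,i} = \rho_B$ for every $i$ with $p_i > 0$ and every POVM $\{\Pi_{A,i}\}$. Unpacking the definition of $\tilde{\rho}_{B,i}$ from Eq.~\eqref{rho tilde}, this is the identity $\Tr_A\bigl[(\Pi \otimes \id_B)\,\rho_{AB}\bigr] = \Tr[\Pi\,\rho_A]\,\rho_B$ for every positive semidefinite $\Pi$ on $A$ (the $p_i = 0$ case being trivial). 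Because every Hermitian operator on $A$ is a real linear combination of positive operators, the identity extends by linearity to $\Tr_A\bigl[(X \otimes \id_B)\,\rho_{AB}\bigr] = \Tr[X\rho_A]\,\rho_B$ for all Hermitian $X$, which is equivalent to $\rho_{AB} = \rho_A \otimes \rho_B$ because the map $X \mapsto \Tr_A[(X \otimes \id_B)\rho_{AB}]$ determines $\rho_{AB}$ uniquely.

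The main obstacle will be cleanly handling the equality case in the concavity of von Neumann entropy: a naive Jensen-type argument is inadequate since the conditional states $\tilde{\rho}_{B,i}$ need not commute or be orthogonal. The clean route is through the Holevo identity displayed above, which immediately upgrades equality to the pointwise identity $\tilde{\rho}_{B,i} = \rho_B$. A secondary (minor) technicality is to confirm one is sweeping over enough POVMs; restricting to rank-one projective measurements in arbitrary orthonormal bases of $A$ already suffices, and one could alternatively conclude the linearity step via a polarisation identity applied to $\langle a|_A\,\rho_{AB}\,|a'\rangle_A = \langle a|\rho_A|a'\rangle\,\rho_B$.
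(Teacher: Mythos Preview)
Your proof is correct. The paper's argument is organized slightly differently but rests on the same two ingredients: strict concavity of the von Neumann entropy and the elementary fact (isolated there as a lemma) that $\rho_{AB}$ is a product if and only if $\Tr_A[(\Pi\otimes\id_B)\rho_{AB}]$ is proportional to a fixed state for all $\Pi\geq 0$. Rather than routing through Eq.~\eqref{workass4} and arguing by contrapositive, the paper works directly with the definition of $W_a^{B|A}$: it invokes the factorization lemma to produce a concrete POVM on $A$ containing two elements with distinct post-measurement states $\tilde\rho_{B,1}\neq\tilde\rho_{B,2}$, and then applies strict concavity of the entropy to the sum $\sum_i p_i S(\tilde\rho_{B,i}\|\gamma_B)$ in one stroke. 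Your Holevo-identity step is exactly the clean way to extract the equality case of that concavity, and your linearity argument reproduces the factorization lemma; so the two proofs are essentially the same argument read in opposite directions, with yours gaining a little economy by reusing Eq.~\eqref{workass4}.
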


Before we present a proof of the above result, it may be useful to recall an elementary lemma.

\begin{lemma} \label{lemmino}
A bipartite quantum state $\rho_{AB}$ is factorized iff for all $\Pi_A\geq 0$ the operator $\Tr_A\left[\Pi_A\otimes \id_B\, \rho_{AB}\right]$ is proportional to a fixed state $\sigma_B$ (independent of $\Pi_A$).
\end{lemma}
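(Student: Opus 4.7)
The forward direction is immediate: if $\rho_{AB} = \rho_A \otimes \rho_B$, then $\Tr_A[\Pi_A \otimes \id_B \cdot \rho_{AB}] = \Tr[\Pi_A \rho_A]\,\rho_B$ for every $\Pi_A$, which is trivially proportional to the fixed state $\rho_B$. The content of the lemma is thus the converse, and my plan is to promote the hypothesis, initially stated only for positive $\Pi_A$, to an identity valid for every operator on $A$, and then read off the block structure of $\rho_{AB}$ by plugging in a basis of $\mathcal{L}(\mathcal{H}_A)$.

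First, I would define a nonnegative scalar $c(\Pi_A)$ through $\Tr_A[\Pi_A \otimes \id_B \cdot \rho_{AB}] = c(\Pi_A)\,\sigma_B$ for each $\Pi_A \geq 0$. Since $\sigma_B \neq 0$, additivity $c(\Pi_A + \Pi_A') = c(\Pi_A) + c(\Pi_A')$ and positive homogeneity of $c$ follow from linearity of the partial trace together with cancellation of the common factor $\sigma_B$. Hence $c$ extends uniquely to a linear functional on $\mathcal{L}(\mathcal{H}_A)$, by writing every hermitian as a difference of two positives and then complex-linearly combining. Taking the trace over $B$ of both sides of the definition identifies the functional as $c(X_A) = \Tr[X_A \rho_A]$, and specializing to $X_A = \id_A$ then forces $\sigma_B = \rho_B$.

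The remaining step is to apply the extended identity $\Tr_A[X_A \otimes \id_B \cdot \rho_{AB}] = \Tr[X_A \rho_A]\,\rho_B$ at $X_A = |j\rangle\langle i|_A$ for a fixed orthonormal basis $\{|i\rangle_A\}$. The left-hand side picks out the block $R_{ij}^B \coloneqq \langle i|_A \rho_{AB} |j\rangle_A$ appearing in the tautological expansion $\rho_{AB} = \sum_{ij}|i\rangle\langle j|_A \otimes R_{ij}^B$, while the right-hand side equals $(\rho_A)_{ij}\,\rho_B$. Matching the two and reassembling yields $\rho_{AB} = \rho_A \otimes \rho_B$, as claimed. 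The only step requiring genuine care is the linear extension of $c$ from the positive cone to all of $\mathcal{L}(\mathcal{H}_A)$; everything else is a direct calculation.
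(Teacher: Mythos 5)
Your proof is correct and follows essentially the same route as the paper's: extend the hypothesis from positive $\Pi_A$ to arbitrary operators by linearity, evaluate on the matrix units $\ketbraa{j}{i}_A$, and reassemble the blocks to conclude $\rho_{AB}=\rho_A\otimes\rho_B$. The only difference is that you spell out the linear extension of $c$ and explicitly identify $c(X_A)=\Tr\left[X_A\rho_A\right]$ and $\sigma_B=\rho_B$, details the paper leaves implicit.
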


\begin{proof}
Since any operator can be written as a complex linear combination of at most four positive operators, we deduce that for all operators $N_A$ we have that $\Tr_A\left[N_A\otimes \id_B\, \rho_{AB}\right] = c(N)\, \sigma_B$, where $c(N)$ is a complex scalar. Choosing $N_A = \ketbraa{i}{j}_A$ for some basis $\ket{i}_A$ of the Hilbert space on $A$, and then summing over $i,j$, one obtains
\begin{align*}
    \rho_{AB} &= \sum_{i,j} \ketbraa{i}{j}_A\otimes \Tr_A\left[\ketbraa{i}{j}_A \otimes \id_B\, \rho_{AB}\right] \\
    &= \sum_{i,j} \ketbraa{i}{j}_A\otimes c_{ij} \sigma_B \\
    &= \left( \sum_{i,j} c_{ij} \ketbraa{i}{j} \right)_A \otimes \sigma_B\, ,
\end{align*}
implying that $\rho_{AB}=\rho_A\otimes \rho_B$ is factorized.
\end{proof}

We are now ready to prove the main result of this section.

\begin{proof}[Proof of Proposition~\ref{assistance helps}]
Upon measuring her subsystem, Alice will leave Bob in the state $\tilde{\rho}_{B,i}\propto \Tr_A\left[\Pi_{A,i}\otimes \id_B\, \rho_{AB}\right]$ of~\eqref{rho tilde} with probability $p_i = \Tr \left[\rho_{A} \Pi_{A,i}\right]$. By Lemma~\ref{lemmino}, we know that if $\rho_{AB}\neq \rho_A\otimes \rho_B$ there are two positive operators $\Pi_{A,1}, \Pi_{A,2}\geq 0$ such that $\tilde{\rho}_{B,1} \neq \tilde{\rho}_{B,2}$. Up to rescaling them, we can make sure that $\Pi_{A,1}+\Pi_{A,2}\leq \id_A$, so that there exists a valid POVM that includes both $\Pi_{A,1}$ and $\Pi_{A,2}$. Invoking the strict concavity of the entropy, it is then elementary to establish that
\begin{align*}
    \beta\, W_a^{B|A}(\rho_{AB}) &\geq \sum_i p_i S(\tilde{\rho}_{B,i} \| \gamma_B) \\
    &= - \sum_i p_i S(\tilde{\rho}_{B,i}) - \sum_i p_i \Tr \left[\tilde{\rho}_{B,i} \log \gamma_B\right] \\
    &> - S\left( \sum_i p_i \tilde{\rho}_{B,i}\right) - \Tr \left[ \left(\sum_i p_i \tilde{\rho}_{B,i}\right) \log\gamma_B \right] \\
    &= - S(\rho_{B}) - \Tr\left[ \rho_{B} \log \gamma_B\right] \\
    &= S(\rho_B \| \gamma_B)\, ,
\end{align*}
which concludes the proof.
\end{proof}

\section{Work of assistance as a function of the Henderson--Vedral measure} \label{workasshen}
\noindent Suppose Alice performs a general quantum operation $\{\Pi_{A,i}\}_i$, causing Bob to receive a state $\tilde{\rho}_{B,i}$, with the associated probabilities $\Tr \rho_A \Pi_{A,i}=p_i$. She can use her classical communication channel to choose a measurement so that Bob's extractable work is maximized,
\begin{align}\label{nonpure}
W_a^{B|A}(\rho_{AB})=&\max_{\{\Pi_{A,i}\}}\frac{1}{\beta}\sum_{i}p_i  S\left(\tilde{\rho}_{B,i}||\gamma_B\right),\\
=&\max_{\{\Pi_{A,i}\}}\left(\langle E \rangle_{\tilde{\rho}_{B,i}}+\frac{1}{\beta}\left[\ln Z_B -\sum_{i}p_{i}S(\tilde{\rho}_{B,i})\right]\right).
\end{align}
The work of assistance can also be written in the following way,
\begin{align}\label{workass0}
W_a^{B|A}(\rho_{AB})=&\max_{\{\Pi_{A,i}\}}\frac{1}{\beta}\sum_i p_i S(\tilde{\rho}_{B,i}||\gamma_B)\nonumber\\
=&\max_{\{\Pi_{A,i}\}}\frac{1}{\beta}\sum_{i}p_i\Tr\left[\tilde{\rho}_{B,i}\left(\log\tilde{\rho}_{B,i}-\log \gamma_B\right)\right]\nonumber\\
=&\max_{\{\Pi_{A,i}\}}\frac{1}{\beta}\sum_{i}p_i\left[-S(\tilde{\rho}_{B,i})-\Tr\tilde{\rho}_{B,i}\log\gamma_B\right]\nonumber\\
=&\max_{\{\Pi_{A,i}\}}\frac{1}{\beta}\left[-\sum_ip_iS(\tilde{\rho}_{B,i})-\Tr\rho_B\log\gamma_B\right],
\end{align}
as $\sum p_i\tilde{\rho}_{B,i}=\rho_B$. Therefore the work of assistance refers to Alice attempting to minimize the local entropy on Bob's side via her measurement,
\begin{align}\label{workass5}
W_a^{B|A}(\rho_{AB})=-\frac{1}{\beta}\left[\Tr\rho_B\log \gamma_B+\min_{\{\Pi_{A,i}\}_i}\sum_i p_iS(\tilde{\rho}_{B,i})\right].
\end{align}
We can lower bound this value using the convexity of the quantum relative entropy
\begin{align}
W_a^{B|A}(\rho_{AB})\ge\frac{1}{\beta}S(\rho_B||\gamma_B),
\end{align}
which is also additive. In order to further investigate the work of assistance we can employ a measure of classical correlations from~\cite{henderson2001classical} defined as
\begin{align}\label{hender}
J^{\rightarrow}(\rho_{AB})&=\max_{\{\Pi_{A,i}\}_i}\left[S(\rho_B)-\sum_i p_iS(\tilde{\rho}_{B,i})\right],
\end{align}
where, like the work of assistance~\eqref{workass0}, the maximization is taken over all the measurements $\{\Pi_{A,i}\}$ applied on Alice's subsystem. Substituting this into~\eqref{workass1} allows us to write the work of assistance as
\begin{align}\label{workass6}
W_a^{B|A}(\rho_{AB})=&-\frac{1}{\beta}\Tr\left[\rho_B\log\gamma_B\right]-\frac{1}{\beta}S(\rho_B)+\frac{1}{\beta}	J^{\rightarrow}(\rho_{AB}),\nonumber\\
=&\frac{1}{\beta}\left(S(\rho_B||\gamma_B)+	J^{\rightarrow}(\rho_{AB})\right).
\end{align}

\section{The regularized work of assistance is the maximum distillable work with one-way communication}
\label{W a infty optimal 1-way}

Here we argue that the regularized work of assistance $W_{a,\infty}^{B|A}$ of Eq.~\eqref{regularized} coincides with the optimal work distillation rate when only Alice $\to$ Bob classical communication is allowed. In this case, it makes no difference whether Bob has access to all Gibbs-preserving operations or only to thermal ones. Therefore, our claim also implies that the work of collaboration $W_c^{B|A}$ of Eq.~\eqref{collab1} coincides with $W_{a,\infty}^{B|A}$ when the operations on Bob's side are required to be thermal, i.e.\ $\mathcal{O}=\mathrm{TO}$.

Start by considering a rate $R$ that is achievable for a fixed state $\rho$ of $AB$ in the sense of Eq.~\eqref{collab1}, where we assume from now on that $\mathcal{O}=\mathrm{TO}$. This means that there is a sequence $\{\Lambda_n\}_n$ of operations on $A:BP$ obtained by concatenating arbitrary quantum instruments on Alice, classical communication Alice $\to$ Bob, and TO on Bob's side (which includes the battery $P$), such that
\begin{equation*}
    \Lambda_n\Big(\rho_{AB}^{\otimes n}\otimes \ketbra{0}_P^{\otimes Rn}\Big) = \ketbra{1}_P^{\otimes[Rn]} + \delta_n\, ,
\end{equation*}
where the above equation defines a sequence of ``remainder terms'' $\{\delta_n\}_n$ that: (i) are traceless, i.e.\ such that $\Tr \delta_n=0$; (ii) satisfy $\epsilon_n\coloneqq \|\delta_n\|_1\to 0$ as $n\to \infty$. By definition of work of assistance, we can write
\begin{equation}
\begin{aligned}
    \frac{1}{n} W_a^{B|A}\left(\rho_{AB}^{\otimes n}\right) &\geq \frac{1}{n\beta} S\left( \ketbra{1}_P^{\otimes [Rn]} +\delta_n \Big\| \gamma_P^{\otimes n}\right) \\
    &= -\frac{1}{n\beta} S\left( \ketbra{1}_P^{\otimes [Rn]} + \delta_n\right) - \frac{1}{n\beta} \Tr\left[ \left( \ketbra{1}_P^{\otimes [Rn]} + \delta_n \right) \log \left( \gamma_P^{\otimes [Rn]}\right)\right] \\
    &= -\frac{1}{n\beta} S\left( \ketbra{1}_P^{\otimes [Rn]} + \delta_n\right) - \frac{[Rn]}{n\beta}\log \left( \frac{e^{-\beta E}}{1+e^{-\beta E}}\right) - \frac{1}{n\beta} \Tr \left[ \delta_n \log \left( \gamma_P^{\otimes [Rn]} \right)\right] \\
    &\geq -\frac{1}{n\beta} S\left( \ketbra{1}_P^{\otimes [Rn]} + \delta_n\right) + \frac{[Rn]}{n}\, E - \frac{1}{n\beta} \Tr \left[ \delta_n \log \left( \gamma_P^{\otimes [Rn]} \right)\right] ,
\end{aligned}
\label{chain}
\end{equation}
where we remembered that the Gibbs state of the qubit battery is given by Eq.~\eqref{Gibbs battery}. Rearranging and taking the limit $n\to \infty$, we obtain
\begin{equation} \label{upper bound R}
    R E = \lim_{n\to\infty} \frac{[Rn]}{n}\, E \leq W_{a,\infty}^{B|A}(\rho_{AB}) + \lim_{n\to\infty} \frac{1}{n\beta} S\left( \ketbra{1}_P^{\otimes [R n]} + \delta_n\right) + \lim_{n\to\infty} \frac{1}{n\beta} \Tr\left[\delta_n \log \left(\gamma_P^{\otimes [Rn]}\right)\right] ,
\end{equation}
provided that those limits exist. We now claim that the second and third term on the r.h.s.\ of Eq.~\eqref{upper bound R} vanish. Indeed, by Fannes's inequality~\cite{Fannes1973, Audenaert2007, Winter2016} one can write
\begin{equation}
\begin{aligned}
  \frac{1}{n} S\left( \ketbra{1}_P^{\otimes [R n]} + \delta_n\right) &\leq \frac{1}{n} \left( S\left( \ketbra{1}_P^{\otimes [R n]}\right) + \frac12 \epsilon_n \log \left(2^{[Rn]}\right) + h_2\left(\epsilon_n/2\right) \right) \\
  &= \frac{[Rn]}{2n}\,\epsilon_n + \frac1n h_2\left(\epsilon_n/2\right) \\
  &\underset{n\to \infty}{\longrightarrow} 0\, ,
\end{aligned}
\label{limit entropy}
\end{equation}
where in the last step we used the fact that $\lim_{n\to\infty}\epsilon_n = 0$. As for the third term on the r.h.s.\ of Eq.~\eqref{upper bound R}, we observe that
\begin{equation*}
    \log \left(\gamma_P^{\otimes [Rn]}\right) = (\log \gamma_P) \otimes \id\otimes \ldots \otimes \id + \id \otimes (\log \gamma_P) \otimes \ldots \otimes \id +\ldots + \id\otimes \ldots \otimes \id \otimes (\log \gamma_P)\, ,
\end{equation*}
from which one deduces immediately that
\begin{equation*}
    \left\|\log \left( \gamma_P^{\otimes [Rn]}\right)\right\|_\infty = [Rn] \left\|\log \gamma_P\right\|_\infty\, .
\end{equation*}
Applying H\"older's inequality, we then obtain
\begin{equation}
\begin{aligned}
    \frac{1}{n} \Tr\left[\delta_n \log \left(\gamma_P^{\otimes [Rn]}\right)\right] &\leq \frac1n \|\delta_n\|_1 \left\|\log \left( \gamma_P^{\otimes [Rn]}\right)\right\|_\infty \\
    &= \frac{[Rn]}{n}\, \epsilon_n \|\log\gamma_P\|_\infty \\
    &\underset{n\to\infty}{\longrightarrow} 0\, ,
\end{aligned}
\label{limit delta}
\end{equation}
where the last step is made possible by the fact that $\|\log \gamma_P\|_\infty$ is a constant independent of $n$.

We have thus established that the r.h.s.\ of Eq.~\eqref{upper bound R} coincides with the regularized work of assistance $W_{a,\infty}$, which then upper bounds any achievable rate in the expression for the work of collaboration corresponding to the setting where Bob can only access TO. This concludes the proof.

\section{Work of assistance for isotropic states}\label{workass}
\noindent We now continue with an explicit example of how the work of assistance can be computed by considering the following non-trivial family of bipartite states. The isotropic states~\cite{horodecki1999reduction} on a $d\times d$ system appear as
\begin{equation}\label{Isotropic}
\rho_{AB}(\lambda):=\lambda\Phi_d+\frac{(1-\lambda)}{d^2}\id_{AB},
\end{equation}
where $\Phi_d=|\Phi_d\rangle\langle\Phi_d|$ is the maximally entangled state $|\Phi_d\rangle:=\frac{1}{\sqrt{d}}\sum_{i=1}^{d}|ii\rangle$. The isotropic state is a legitimate state ($\rho_{AB}(\lambda)\ge0$) for $-\frac{1}{d^2-1}\le\lambda\le1$ and entangled for $\lambda>\frac{1}{d+1}$.
The choice of an isotropic state allows us to quantify the role of entanglement in the work of assistance through the constant $\lambda$.  \\
\\In this section we will attempt to determine the ensemble $\{p_i,\tilde{\rho}_{B,i}\}$ of $d$-dimensional states obtained when Alice measures her subsystem of an isotropic state $\rho_{AB}(\lambda)$ where $\lambda$ is a priori fixed. We will then attempt to determine the form of this ensemble that maximizes the distillable work on Bob's subsystem. \\
\\Consider a POVM $\{\Pi_{A,i}\}$ such that $p_i=\Tr_{AB}[(\Pi_{A,i}\otimes\id)\rho_{AB}(\lambda)]$ and $p_i\tilde{\rho}_i=\Tr_A[(\Pi_{A,i}\otimes\id)\rho_{AB}(\lambda)]$. The average of this ensemble can be calculated as
\begin{align}
\sum_i p_i\tilde{\rho}_i=&\Tr_A\left[\left(\sum_i\Pi_{A,i}\otimes\id_B\right)\rho_{AB}(\lambda)\right],\nonumber\\
=&\Tr_A\left[\rho_{AB}(\lambda)\right],\nonumber\\
=&\rho_B(\lambda),\nonumber\\
=&\frac{\id_B}{d}\,\,\,\,\forall\,\,\,\,\lambda.
\end{align}
The states in the ensemble appear as,
\begin{align}
\tilde{\rho}_i=&\frac{1}{p_i}\Tr_A\left[\left(\Pi_{A,i}\otimes\id_B\right)\left(\lambda\Phi_d+\frac{(1-\lambda)}{d^2}\id_{AB}\right)\right],\nonumber\\
=&\frac{1}{p_i}\left\{\lambda\Tr_A\left[\left(\sqrt{\Pi_{A,i}}\otimes\id_B\right)|\Phi_d\rangle\langle\Phi_d|\left(\sqrt{\Pi_{A,i}}\otimes\id_B\right)\right]+\frac{1-\lambda}{d^2}\Tr_A\left[\left(\Pi_{A,i}\otimes\id_B\right)\id_{AB}\right]\right\},
\end{align}
Using the identity $M\otimes\id|\Psi\rangle=\id\otimes M^\dagger|\Psi\rangle$, we have
\begin{align}\label{der1}
\tilde{\rho}_i=&\frac{1}{p_i}\left\{\lambda\Tr_A\left[\left(\id_A\otimes\sqrt{(\Pi_{B,i})^T}\right)|\Phi_d\rangle\langle\Phi_d|\left(\id_A\otimes\sqrt{(\Pi_{B_i})^T}\right)\right]+\frac{1-\lambda}{d^2}\Tr\left[\Pi_i\right]\id_B\right\},\nonumber\\
=&\frac{1}{p_i}\left\{\lambda\sqrt{(\Pi_{B,i})^T}\frac{\id_B}{d}\sqrt{(\Pi_{B,i})^T}+\frac{1-\lambda}{d^2}\Tr[\Pi_i]\id_B\right\},\nonumber\\
=&\frac{1}{p_i}\left\{\frac{\lambda}{d}(\Pi_{B,i})^T+\frac{1-\lambda}{d^2}\Tr[\Pi_i]\id_B\right\}.
\end{align}
By imposing $\Tr\tilde{\rho}_i=1$, this yields,
\begin{align}
1=&\Tr\tilde{\rho}_i,\nonumber\\
=&\frac{1}{p_i}\left\{\frac{\lambda}{d}\Tr \Pi_i+\frac{1-\lambda}{d}\Tr \Pi_i\right\},\nonumber\\
=&\frac{1}{p_i}\frac{\Tr \Pi_i}{d}.
\end{align}
Therefore,
\begin{align}\label{prob}
p_i=\frac{\Tr \Pi_i}{d}
\end{align}
and hence,
\begin{align}\label{states1}
\tilde{\rho}_i=\lambda\frac{(\Pi_{B,i})^T}{\Tr \Pi_i}+\frac{1-\lambda}{d}\id_B.
\end{align}
Depending on whether our fixed variable $\lambda$ is positive or negative, as $\Pi_{B,i}\ge0$ we can write the following,
\begin{align}\label{states2}
\tilde{\rho}_i
\Bigg{\{}\begin{array}{l}
\,\,{\geq} \frac{1-\lambda}{d}\,\id_B\,\,\text{for}\,\,\lambda\ge0 \\
\\
\,\,< \frac{1-\lambda}{d}\,\id_B\,\,\text{for}\,\,\lambda<0
\end{array}.
\end{align}
Which means we can write the ensembles states as
\begin{align}
\tilde{\rho}_i=\frac{1-\lambda}{d}\id\pm\delta_i,\,\,\,\delta_i\ge0\,\,\,\forall\,i.
\end{align}
From this ensemble we will attempt to find a POVM on $A$ that generates it. We start from the fact that the average of this ensemble is the maximally mixed state,
\begin{align}
\frac{\id}{d}=&\sum_ip_i\tilde{\rho}_i,\nonumber\\
=&\sum_{i}p_i\left(\frac{1-\lambda}{d}\id\pm\delta_i\right),\nonumber\\
=&\frac{1-\lambda}{d}\id\pm\sum_{i}p_i\delta_i.
\end{align}
From this we can construct our set of POVM operators
\begin{align}
\pm\sum_ip_i\delta_i=&\lambda\frac{\id}{d},\nonumber\\
\implies\,\,\pm\sum_i\frac{d}{\lambda}p_i\delta_i=&\id,
\end{align}
implying that the above operators form a valid POVM if they have the form $\Pi_i:=\pm\frac{d}{\lambda}p_i\delta_i$. \\
\\This POVM can be checked with our current ensemble,
\begin{align}
\left\{p_i=\frac{\Tr \Pi_i}{d},\,\,\,\tilde{\rho}_i=\lambda\frac{(\Pi_i)^T}{\Tr \Pi_i}+\frac{1-\lambda}{d}\id\right\}_i,
\end{align}
in our case,
\begin{align}
p_i=&\frac{\Tr\Pi_{i}}{d}=\pm\frac{1}{d}\Tr\frac{d}{\lambda}p_i\delta_i
=\pm p_i\frac{\Tr\delta_i}{\lambda},\nonumber\\
=&p_i\frac{\Tr\left[\tilde{\rho}_i-\frac{1-\lambda}{d}\id\right]}{\lambda}=p_i\frac{1-(1-\lambda)}{\lambda},\nonumber\\=&p_i,
\end{align}
and
\begin{align}
\tilde{\rho}_i=&\lambda\frac{(\Pi_i)^T}{\Tr \Pi_i}+\frac{1-\lambda}{d}\id=\lambda\frac{\pm\frac{d}{\lambda}p_i\delta_i}{\pm\frac{d}{\lambda}p_i\Tr\delta_i}+\frac{1-\lambda}{d}\id,\nonumber\\
=&\lambda\frac{\pm\delta_i}{\pm\Tr\delta_i}+\frac{1-\lambda}{d}\id=\lambda\frac{\tilde{\rho}_i-\frac{1-\lambda}{d}\id}{\lambda}+\frac{1-\lambda}{d}\id,\nonumber\\
=&\tilde{\rho}_i-\frac{1-\lambda}{d}\id+\frac{1-\lambda}{d}\id,\nonumber\\
=&\tilde{\rho}_i.
\end{align}
confirming our choice of POVM. \\
\\In order to use this conceived ensemble in the work of assistance~\eqref{workass1} it must minimize the entropy. In order for a  mixed state to minimize its entropy it must concentrate its spectrum on a single eigenvalue, making it as pure as possible. The conditions on these states~\eqref{states1} and probabilities~\eqref{prob} are that their average is the maximally mixed state and that equation~\eqref{states2} is satisfied for positive or negative fixed values of $\lambda$.\\
\\An ensemble for which all the above holds is the following,
\begin{align}
\left\{p_i=\frac{1}{d},\,\,\,\tilde{\rho}_i=\frac{1-\lambda}{d}\id+\lambda|i\rangle\langle i|\,\,\,\forall\,i\right\}_i.
\end{align}
which, using $\lambda+\frac{1-\lambda}{d}=1-(d-1)\frac{1-\lambda}{d}$, allows us to write down the work of assistance for this family of states,
\begin{align}
W_a^{B|A}(\rho_{AB}(\lambda))=-\frac{1}{\beta}\left[\Tr\frac{1}{d}\log \gamma_B+H\left(\overbrace{\frac{1-\lambda}{d},...,\frac{1-\lambda}{d}}^{d-1},1-(d-1)\frac{1-\lambda}{d}\right)\right],
\end{align}
where $H(p_1,...,p_d)=-\sum_ip_i\log p_i$ is the Shannon entropy.

It is known~\cite{King2003}~\cite[p.~448]{HAYASHI} that the Henderson-Vedral measure $J^{\rightarrow}$ and hence the work of assistance~\eqref{workass4} is additive over the family of isotropic states.

\section{Monotonicity and rewriting of $W_{r}^{B|A}(\rho_{AB})$}\label{rewrite}
We start with the following definition,
\begin{align}
W_{r}^{B|A}(\rho_{AB})=\frac{1}{\beta}\min_{\Gamma_{AB}\in {QT}}S(\rho_{AB}||\Gamma_{AB}),
\end{align}
where the minimization is taken over the set ${QT}$. \\
As the thermal state $\gamma_B$ appearing within the QT state $\Gamma_{AB}=\sigma_{A}\otimes\gamma_B$ is fixed, the minimization should be taken over Alice's state $\sigma_A$,
\begin{align}\label{mutual2}
W_{r}^{B|A}(\rho_{AB})=&\frac{1}{\beta}\min_{\sigma_A}S(\rho_{AB}||\sigma_A \otimes \gamma_B)\nonumber\\
=&\frac{1}{\beta}\min_{\sigma_A}\left\{-\Tr\rho_{AB}(\log\sigma_A+\log\gamma_B)-S(\rho_{AB})\right\}\nonumber\\
=&\frac{1}{\beta}\left(-\left(\max_{\sigma_A}\Tr\rho_A \log\sigma_A \right)-\Tr\rho_B \log\gamma_B-S(\rho_{AB})\right).
\end{align}
Now using the following, $S(\rho_A)=-\max_{\sigma_A}\Tr\rho_A \log\sigma_A $, as $\min_{\sigma_A}S(\rho_A ||\sigma_A)=0$, in~\eqref{mutual2}, we get
\begin{align}\label{mutual}
W_{r}^{B|A}(\rho_{AB})=&\frac{1}{\beta}\left(-S(\rho_{AB})+S(\rho_A)-\Tr\rho_B \log\gamma_B\right)\nonumber\\
=& \frac{1}{\beta}S(\rho_{AB}||\rho_A \otimes \gamma_B).
\end{align}
If we take the bipartite quantum state $\Lambda(\rho_{AB})=\tilde{\rho}_{A'B}$ where $\Lambda$ are the set of allowed operations, we can write,
\begin{align}
	W_{r}^{B|A}\left(\tilde{\rho}_{A'B}\right)=&\frac{1}{\beta}\inf_{\tilde{\sigma}_{A'}}S\left(\tilde{\rho}_{A'B}||\tilde{\sigma}_{A'} \otimes \gamma_B\right)\nonumber\\
	\le&\frac{1}{\beta}S\left(\tilde{\rho}_{A'B}||\rho_A\otimes\gamma_B\right)\nonumber\\
	\le&\frac{1}{\beta}S\left(\rho_{AB}||\rho_A\otimes\gamma_B\right)\nonumber\\
	=&	W_{r}^{B|A}\left(\rho_{AB}\right),
\end{align}
where $\Lambda\left(\rho_A\otimes\gamma_B\right)=\tilde{\sigma}_{A'} \otimes \gamma_B$, demonstrating the monotonicity of $W_{r}^{B|A}(\rho_{AB})$ under the set of allowed operations. Also due to the additivity property of relative entropy, $W_{r}^{B|A}(\rho_{AB})$ is also additive.

We also note that $W_{r}^{B|A}(\rho_{AB})$ can be expressed in relation to the quantum mutual information,
\begin{align}\label{rel}
W_{r}^{B|A}(\rho_{AB})=&\frac{1}{\beta}\left(-\Tr \rho_{AB}\log(\rho_A\otimes\gamma_B)-S(\rho_{AB})\right)\nonumber\\
=&\frac{1}{\beta}\left(-\Tr \rho^{A}\log(\rho_A)-\Tr \rho_B\log(\gamma_B)-S(\rho_{AB})\right)\nonumber\\
=&\frac{1}{\beta}\left(S(\rho^{A}||\rho_A)+S(\rho_B||\gamma_B)+I(\rho_{AB})\right),\nonumber\\
=&\frac{1}{\beta}\left(S(\rho_B||\gamma_B)+I(\rho_{AB})\right).
\end{align}
which is a measure of the shared total correlations between the two parties.

\section{Upper bound on the work of collaboration}\label{UpperColab}
In this section we will prove that the work of collaboration is upper bounded by $W_{r}^{B|A}(\rho_{AB})$,
\begin{align}
W_{r}^{B|A}(\rho_{AB})\ge W_{c}^{B|A}(\rho_{AB}).
\end{align}
This proof will follow analogously to~\cite{vedral1998entanglement,chitambar2016assisted}.
Let $R$ be an achievable rate for Eq.~\eqref{collab1}, so that there exists a sequence of protocols $\Lambda_n\in \mathcal{O}_c^{B|A}$ with the property that $\Lambda_n \left( \rho_{AB}^{\otimes n} \otimes \ketbra{0}_P^{\otimes [Rn]}\right) = \ketbra{1}_P^{\otimes [Rn]} + \delta_n$, where $\lim_{n\to\infty} \|\delta_n\|_1=0$. Using the monotonicity of $W_r^{B|A}$ together with its additivity, and repeating the steps in Eq.~\eqref{chain}, we obtain that
\begin{align*}
    W_r^{B|A}(\rho_{AB}) &= \frac1n W_r^{B|A}\left( \rho_{AB}^{\otimes n}\otimes \ketbra{0}_P^{\otimes [Rn]}\right) \\
    &\geq \frac1n W_r^{B|A}\left(\ketbra{1}_P^{\otimes [Rn]} + \delta_n\right) \\
    &= \frac{1}{\beta n} S\left( \ketbra{1}_P^{\otimes [Rn]} + \delta_n \Big\|\gamma_P^{\otimes [Rn]}\right) \\
    &\geq -\frac{1}{\beta n} S\left( \ketbra{1}_P^{\otimes [Rn]} + \delta_n\right) + \frac{[Rn]}{n} E - \frac{1}{n\beta} \Tr \left[ \delta_n \log \left( \gamma_P^{\otimes [Rn]} \right)\right] .
\end{align*}
Note that the equality in the third line holds because the battery $P$ is on Bob's side. Employing the already established Eq.~\eqref{limit entropy} and~\eqref{limit delta}, we arrive at the bound $RE\leq W_r^{B|A}(\rho_{AB})$, concluding the proof.

\end{document}